\documentclass{article}

\usepackage{hyperref}
\usepackage{graphicx} 
\usepackage{blindtext}
\usepackage{titlesec}
\usepackage{mathrsfs}
\usepackage{physics}
\usepackage{tensor}
\usepackage{amsfonts}
\usepackage{mathtools}
\usepackage[a4paper, total={17.6cm, 26.13cm}]{geometry}
\usepackage{url}
\usepackage{amsmath}
\usepackage{amsthm}
\usepackage{scalerel}
\usepackage{subcaption}
\usepackage{xcolor}

\newtheorem{definition}{Definition}[section]
\newtheorem{theorem}{Theorem}[section]
\newtheorem{corollary}{Corollary }[theorem]
\newtheorem{lemma}[theorem]{Lemma}

\DeclareSymbolFont{toneletters}{T1}{\familydefault}{m}{it}
\SetSymbolFont{toneletters}{bold}{T1}{\familydefault}{bx}{it}

\DeclareMathSymbol\ethm{\mathord}{toneletters}{"F0}

\def\be{\begin{equation}}
\def\ee{\end{equation}}
\def\ba{\begin{eqnarray}}
\def\ea{\end{eqnarray}}

\DeclareMathOperator{\ethmmbar}{\ethm_{c}^{\prime}}
\DeclareMathOperator{\ethmm}{\ethm_{c}}

\begin{document}
\title{On the third law of black hole mechanics for supersymmetric black holes}
\author{Aidan~M.~McSharry and Harvey~S.~Reall\\ {\footnotesize Department of Applied Mathematics and Theoretical Physics, University of Cambridge} \\ {\footnotesize Wilberforce Road, Cambridge CB3 0WA, United Kingdom}\\ {\footnotesize amm323@cam.ac.uk, hsr1000@cam.ac.uk}}
\maketitle

\begin{abstract}
Recently it has been shown that the third law of black hole mechanics can be violated: an exactly extremal Reissner-Nordstr\"om black hole can form in finite time in gravitational collapse of matter with a large charge to mass ratio. However, it has also been proved that this cannot happen if the matter satisfies a ``supersymmetric'' lower bound on its energy in terms of its charge. This paper proves an analogous result for black holes with a negative cosmological constant. The result states that a supersymmetric Kerr-Newman-anti de Sitter black hole cannot form in gravitational collapse of charged matter satisfying the supersymmetric bound. The results for zero or negative cosmological constant are extended to apply to two-sided black holes: it is proved that an initially non-extremal black hole cannot evolve to a supersymmetric black hole in finite time, irrespective of whether or not the initial black hole was formed in gravitational collapse. 
\end{abstract}

\section{Introduction}

The third law of black hole mechanics states that it is impossible for a non-extremal black hole to become extremal in finite time in classical General Relativity. This was initially conjectured in \cite{bardeen1973four} and a proof was presented in \cite{israel1986third}. However, counterexamples to this law were found recently: Kehle and Unger constructed third-law violating solutions of Einstein-Maxwell theory coupled to a massless charged scalar field \cite{kehle2025gravitational} (or charged Vlasov matter  with small mass parameter \cite{kehle2024extremalblackholeformation}). These solutions describe spherically symmetric gravitational collapse of the scalar field to form an exactly extremal Reissner-Nordstr\"om (RN) black hole in finite time, with an intermediate phase in which the solution is exactly Schwarzschild at the horizon.

These examples involve matter with a large charge to mass ratio. One expects it to be harder to form an extremal RN black hole if the charge to mass ratio is bounded. In particular, consider matter satisfying the ``local mass-charge inequality'' \cite{gibbons1982bogomolny,ghhp}:
\begin{equation}
\label{eq:localbps}
 T^{(m)}_{00} \ge \sqrt{T^{(m)}_{0i}T^{(m)}_{0i} + J_0^2}
\end{equation}
where indices $\{0,i\}$ $(i=1,2,3)$ refer to an arbitrary orthonormal frame,  $T^{(m)}_{ab}$ is the energy-momentum tensor of the charged matter and  $J_a$ its electric current density. This inequality is a strengthened version of the dominant energy condition. It has been proved recently that if matter respects this inequality then it is impossible for an extremal RN black hole to form in finite time in gravitational collapse \cite{reall2025lawblackholemechanics}. This excludes spacetimes of the type constructed by Kehle and Unger, which are exactly extremal RN everywhere outside the horizon at late time. It also excludes the much larger class of (not necessarily spherically symmetric) spacetimes describing gravitational collapse to form a black hole that is extremal RN in a neighbourhood of the horizon at late time but differs from extremal RN outside this neighbourhood because of the presence of outgoing electromagnetic and/or gravitational radiation and/or outgoing matter. A Penrose diagram for such a spacetime is shown in Figure \ref{fig:penrose_collapse}(a). 

The proof of this result combines ideas from two sources. First, the Gibbons-Hull spinorial proof of the global mass-charge inequality (``BPS bound'') for Einstein-Maxwell theory coupled to matter satisfying \eqref{eq:localbps} \cite{gibbons1982bogomolny}. Second, the spinorial quasilocal mass construction of Dougan and Mason \cite{douganquasi}. Using these ideas it was shown that if $\Sigma$ is a spacelike 3-surface with boundary $S$ and the (spacetime) metric, extrinsic curvature, and Maxwell field on $S$ coincide with those on a horizon cross-section of extremal RN then $\Sigma$ cannot be compact. Taking $\Sigma$ to be a surface inside an extremal RN black hole formed in gravitational collapse, with $S$ a late-time horizon cross-section (as shown in Fig. \ref{fig:penrose_collapse}(a)) immediately gives a contradiction since such $\Sigma$ would be compact.

\begin{figure*}[t!]
    \centering
    \begin{subfigure}[t]{0.5\textwidth}
        \centering
        \includegraphics[width =6.5cm, height = 5.5cm]{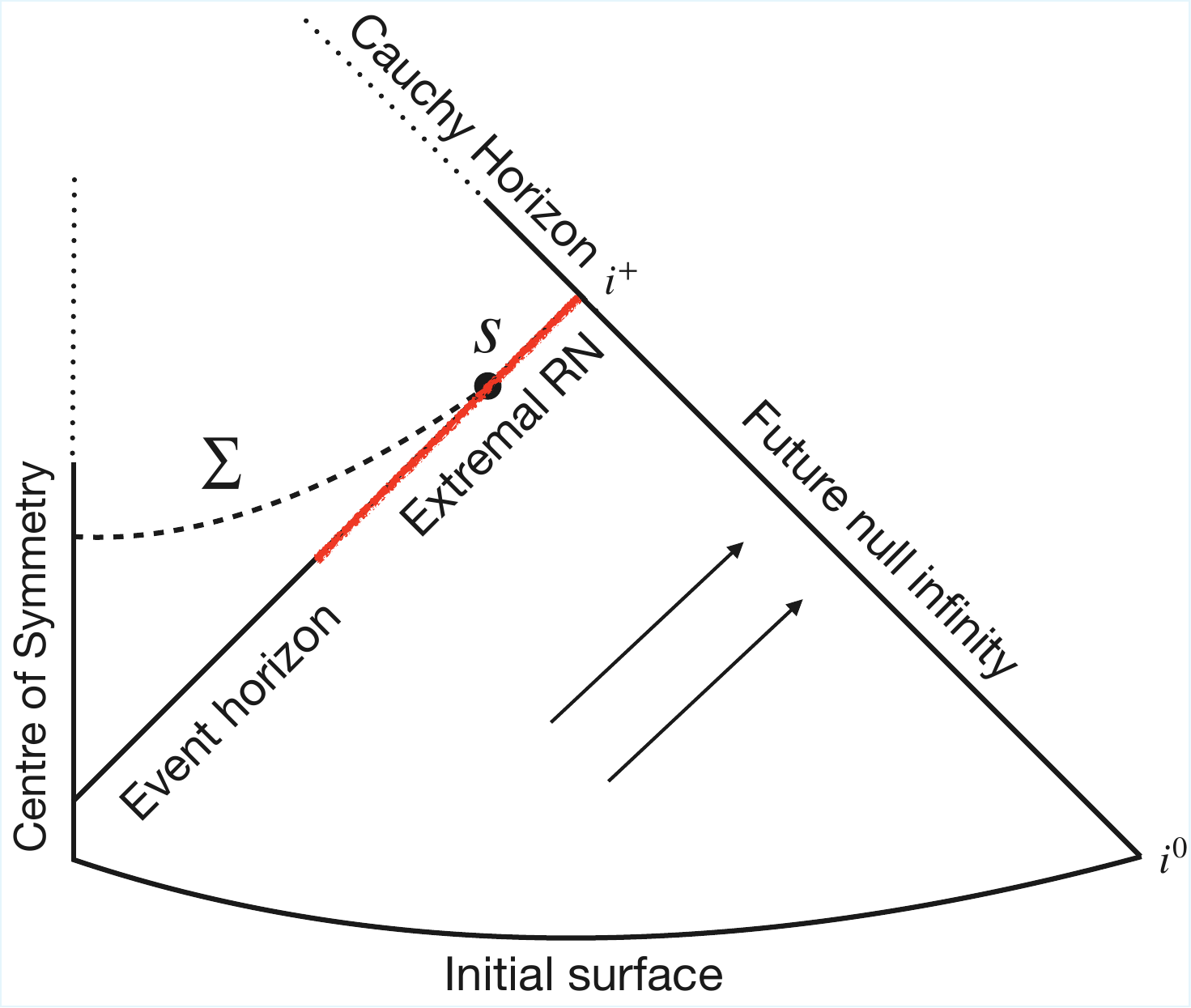}
        \caption{}
    \end{subfigure}%
    ~ 
    \begin{subfigure}[t]{0.5\textwidth}
        \centering
        \includegraphics[width = 5.0cm, height = 6.0cm]{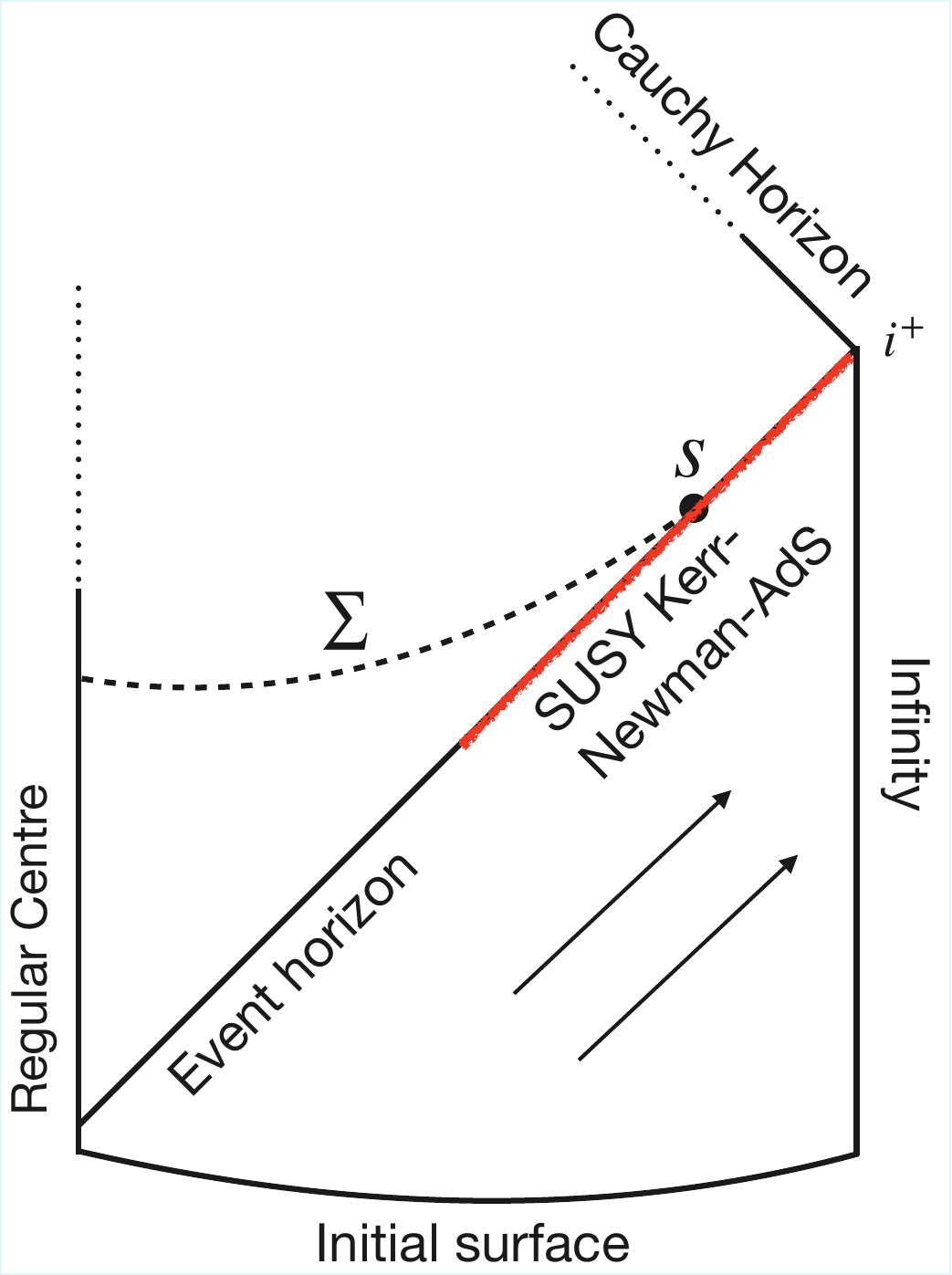}
        \caption{}
    \end{subfigure}
    \caption{(a) Penrose diagram showing the formation of an extremal Reissner-Nordstr\"om black hole in spherically symmetric gravitational collapse of charged matter. The final black hole coincides with extremal Reissner-Nordstr\"om in the red-shaded region. The arrows denote outgoing matter or electromagnetic/gravitational radiation (the latter would break spherical symmetry). Ref \cite{reall2025lawblackholemechanics} proves that such spacetimes do not exist if the matter satisfies the local mass-charge inequality \eqref{eq:localbps}. (b) Diagram showing the formation of a supersymmetric Kerr-Newman-adS black hole in gravitational collapse. Such a process could not be spherically symmetric so this diagram is only schematic. We prove that spacetimes like (b) cannot exist if the charged matter satisfies the local mass-charge inequality, independently of any boundary conditions at infinity.}
\label{fig:penrose_collapse}
\end{figure*}
In the present paper we will extend this result in two ways. First, we will allow for a negative cosmological constant, i.e., we will consider asymptotically (locally) anti-de Sitter (adS) black holes. We will prove that a {\it supersymmetric} adS black hole cannot form in finite time in gravitational collapse. Here, ``supersymmetric'' refers to the existence of a non-trivial ``supercovariantly constant'' spinor. Such black holes are necessarily extremal. Extremal Reissner-Nordstr\"om is supersymmetric but extremal Reissner-Nordstr\"om-adS is not. However, there exists a 1-parameter subfamily of Kerr-Newman-adS black holes that are supersymmetric. These are the solutions with $E=|J|/\ell + |Q|$ and $P=0$ where $E$ is the energy, $J$ the angular momentum, $\ell$ the adS radius, $Q$ the electric charge and $P$ the magnetic charge \cite{Kosteleck__1996,caldarelli1999supersymmetry}. This family can be parameterized with $Q$ and it has $J \ne 0$. More precisely, we will prove that a black hole formed in finite time in gravitational collapse cannot admit a horizon cross-section on which the (spacetime) metric, extrinsic curvature and Maxwell field are the same as those of a supersymmetric adS black hole:

\begin{theorem} 
\label{thm:inf1}
(Informal version.)
 Consider a smooth spacetime satisfying the Einstein-Maxwell equations with a negative cosmological constant and charged matter satisfying the local mass-charge inequality \eqref{eq:localbps}. Let $\Sigma$ be a  smooth, compact, connected, spacelike $3$-surface with $\partial \Sigma=S$ where $S$ is a smooth, compact, connected $2$-surface. Then there does not exist a local diffeomorphism that maps the spacetime metric, Maxwell field and extrinsic curvature on $S$ to the corresponding quantities on a horizon cross-section of a supersymmetric black hole.
\end{theorem}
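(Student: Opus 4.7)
The plan is to extend the spinorial argument of \cite{reall2025lawblackholemechanics} from the asymptotically flat to the asymptotically adS setting by replacing the ungauged supergravity supercovariant derivative used there with the four-dimensional minimal $N=2$ gauged-supergravity supercovariant derivative $\hat\nabla_a$, schematically
\begin{equation*}
\hat\nabla_a \epsilon = \nabla_a \epsilon + \tfrac{1}{2\ell}\Gamma_a \epsilon - \tfrac{i}{\ell}A_a \epsilon + (\text{Maxwell bilinear term})\epsilon,
\end{equation*}
whose parallel spinors are exactly those admitted by the supersymmetric Kerr-Newman-adS family of \cite{caldarelli1999supersymmetry}. The argument will combine a Gibbons-Hull-Kostelecky-Perry positivity calculation for matter obeying the local mass-charge inequality \eqref{eq:localbps} with a Dougan-Mason-type \cite{douganquasi} boundary-value problem on $\Sigma$.

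Concretely, I would first derive a Witten-Lichnerowicz identity on $\Sigma$ of the form
\begin{equation*}
\int_\Sigma \Big( |\hat\nabla\epsilon|^2 + \mathcal{T}^{(m)}[\epsilon] - |\ethm\epsilon|^2 \Big) = \oint_S B(\epsilon;\, g, K, F),
\end{equation*}
where $\ethm$ is the induced Dirac-Witten operator on $\Sigma$, the matter integrand $\mathcal{T}^{(m)}[\epsilon]$ is a pointwise-non-negative quadratic form in $\epsilon$ built from $T^{(m)}_{ab}$ and $J_a$ (non-negativity being equivalent to \eqref{eq:localbps}), and $B$ depends only on the spinor together with the induced metric, extrinsic curvature and Maxwell field on $S$. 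I would then prescribe boundary data on $S$ by pulling back, via the given local diffeomorphism, the restriction to a horizon cross-section $S_0$ of a globally supercovariantly constant spinor $\epsilon_0$ on the exterior of the target supersymmetric black hole, and solve $\ethm\epsilon = 0$ on $\Sigma$ with an appropriate Dougan-Mason half-projection as the elliptic boundary condition. Since $B$ depends only on $S$-data, its value equals that of the corresponding boundary integral on $S_0$ in the supersymmetric black hole; applying the identity there with $\epsilon = \epsilon_0$ (for which the bulk integrands vanish identically) forces this boundary integral to vanish. The identity on $\Sigma$ then forces $\hat\nabla\epsilon \equiv 0$, pointwise saturation of \eqref{eq:localbps}, and non-triviality of $\epsilon$ at $S$.

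The final step is to extract a contradiction from the existence of such a nowhere-vanishing supercovariantly constant spinor $\epsilon$ on the compact surface-with-boundary $(\Sigma,S)$: the associated Killing vector $V^a = \bar\epsilon\Gamma^a\epsilon$ is causal, agrees on $S$ with the null horizon generator of the target supersymmetric black hole, and together with $\hat\nabla\epsilon = 0$ produces a rigid local model for the interior geometry. I expect the main obstacle to lie here, because, unlike the asymptotically flat case, the extra $\Gamma_a/\ell$ term in $\hat\nabla_a$ means the parallel-spinor geometry is locally that of a supersymmetric adS solution rather than a flat one, so the contradiction is not purely topological and will require detailed use of how the prescribed metric, extrinsic curvature and Maxwell field on $S$ extend inward in the Kerr-Newman-adS solution. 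The likely route, mirroring \cite{reall2025lawblackholemechanics}, is to isolate a divergence-type scalar on $\Sigma$ whose integral is simultaneously pinned to a definite non-zero value by the $S$-data and forced to vanish by $\hat\nabla\epsilon \equiv 0$, thereby delivering the contradiction.
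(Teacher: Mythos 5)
The first half of your plan is essentially the paper's: a supercovariant Nester--Witten/Sparling identity on $\Sigma$, a Dougan--Mason boundary-value problem for the gauged Sen--Witten operator, positivity of the matter term from \eqref{eq:localbps}, and the conclusion that the boundary functional vanishes and hence $\epsilon$ extends to a spinor with $h^{b}{}_{a}\hat{\nabla}_b\epsilon=0$ on $\Sigma$ (Theorem \ref{thm:equiv to 2.6}). One caveat there: you argue that the boundary integral vanishes by ``applying the identity'' in the target supersymmetric black hole over a region bounded by $S_0$ --- but no compact spacelike $\Sigma_0$ with $\partial\Sigma_0=S_0$ exists in that spacetime (that is precisely what the theorem is proving cannot happen), so this step as stated is circular. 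It is easily repaired: the vanishing of $\hat I_S[\epsilon]$ is a purely local consequence of $\epsilon$ satisfying the tangential supercovariant-constancy equations on $S$ (the pull-back of $\hat\Lambda(\epsilon)$ to $S$ is identically zero; Lemma \ref{lemma:zero on susy surface}), no bulk identity on the model spacetime is needed.

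The genuine gap is in your final step, which you yourself flag as the expected obstacle but leave unresolved. Two ideas are required that your sketch does not contain. First, the rigidity result only gives $\hat\nabla\epsilon=0$ \emph{tangentially} to $\Sigma$, whereas all the differential consequences you want to exploit (that $X^a$ is Killing, the relation \eqref{eq:scc!} for $\nabla_aX_b$, and the constrained dust-like form of the stress tensor) require a spinor that is supercovariantly constant on an open spacetime neighbourhood; the paper gets around this by passing to the \emph{Killing development} $(\tilde M,\tilde g)$ of the data on $\Sigma$, in which $\epsilon$ does extend to a genuinely supercovariantly constant spinor. Second, the actual contradiction is not a rigid ``local model of the interior'' argument: it is a Komar-type identity \eqref{eq:26 equivalent}. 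Because $S$ is marginally outer trapped with $\rho'>0$, Lemma \ref{lemma: equiv to 2.5} forces $V=0$ on $S$, which makes the boundary integral $\int_S\star\Psi$ vanish; meanwhile the corresponding bulk integral over $\Sigma$ is a sum of non-negative matter and Maxwell terms plus the term $-\tfrac{\Lambda}{8\pi}\,N\cdot X$, which is \emph{strictly} positive for $\Lambda<0$ since $X^a$ is nowhere zero. That strict positivity, special to the negative cosmological constant, is what delivers the contradiction (and is why the adS result is cleaner than the $\Lambda=0$ one, where one only concludes that the charge density vanishes). Your ``divergence-type scalar pinned to a non-zero value by the $S$-data'' gestures at this, but without the Killing development, the vanishing of $V$ on the marginally trapped $S$, and the sign of the $\Lambda$ term, the argument does not close.
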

A precise version of this result is stated as Corollary \ref{theorem:third law in generality} below. The result is of a quasi-local nature, so it does not depend on any choice of boundary conditions at infinity in adS e.g. it holds for both ``reflecting'' and ``transparent'' boundary conditions. Fig. \ref{fig:penrose_collapse}(b) shows an example of a spacetime whose existence is excluded by this result. 

The result of \cite{reall2025lawblackholemechanics} and our new result just described show that if matter satisfies \eqref{eq:localbps} then a supersymmetric black hole cannot form in finite time in gravitational collapse. This implies a version of the third law of black hole mechanics, namely that ``a non-extremal black hole cannot become supersymmetric in finite time'' provided that the initial non-extremal black hole was formed in gravitational collapse. However, it does not cover situations where the initial black hole is not formed in collapse, for example it does not exclude the possibility of forming a supersymmetric black hole starting from a non-extremal black hole with two asymptotically flat (or adS) regions. This possibility is of particular interest in the presence of a negative cosmological constant because there exist supersymmetric black hole solutions which do not belong to the Kerr-Newman-adS family. These solutions are static, with horizon cross-sections given by Riemann surfaces of genus greater than one \cite{caldarelli1999supersymmetry}. They carry magnetic charge and therefore cannot form in gravitational collapse of electrically charged matter.\footnote{
For $\Lambda=0$, the result of \cite{reall2025lawblackholemechanics} allows for magnetically charged matter but for $\Lambda<0$ we cannot include such matter because we require the existence of an electromagnetic potential to define the supercovariant derivative.} Our theorem above has nothing to say about such black holes. But one can ask whether it is possible for such a black hole to form in finite time starting from one or more non-extremal magnetically charged black holes.

The second main achievement of this paper is to extend the results of \cite{reall2025lawblackholemechanics} and the theorem above to prove a third law for supersymmetric black holes that are not necessarily formed in gravitational collapse, as in the case just discussed. Our result says that an initially non-extremal black hole cannot become supersymmetric in finite time. Here, ``non-extremal'' is defined in the same way as in \cite{israel1986third}, namely the existence of a trapped surface. More precisely, consider a spacetime containing a spacelike 3-surface $\Sigma$ with boundary $\partial \Sigma = S \cup T$ where $T$ is a trapped surface (possibly disconnected, inside the ``initial'' black hole(s)) and $S$ is a cross-section of the event-horizon at late time (the ``final'' black hole). Assume that
the (spacetime) metric, extrinsic curvature, and Maxwell field on $S$ coincide with those on a horizon cross-section of a supersymmetric black hole. See Fig. \ref{fig:inner_bdy}. We prove that such $\Sigma$ cannot exist if matter respects the local mass-charge inequality \eqref{eq:localbps} on $\Sigma$:

\begin{theorem} 
\label{thm:inf2}
(Informal version.)
Consider a smooth spacetime satisfying the Einstein-Maxwell equations with vanishing or negative cosmological constant and charged matter satisfying the local mass-charge inequality \eqref{eq:localbps}. Let $\Sigma$ be a  smooth, compact, connected, spacelike $3$-surface with $\partial \Sigma=S \cup T$ where $S$, the ``outer'' boundary of $\Sigma$, is a smooth, compact, connected, surface and where $T$, the ``inner'' boundary of $\Sigma$, is an outer trapped 2-surface. Then there does not exist a local diffeomorphism that maps the spacetime metric, Maxwell field and extrinsic curvature of $S$ to the corresponding quantities on a horizon cross-section of a supersymmetric black hole.
\end{theorem}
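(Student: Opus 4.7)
The plan is to extend the spinorial argument of \cite{reall2025lawblackholemechanics} and Theorem \ref{thm:inf1} by adding a suitable boundary condition at the inner trapped surface $T$. Let $\hat{\nabla}_a$ denote the supercovariant derivative (the Levi-Civita connection twisted by the Maxwell field and, when $\Lambda<0$, shifted by $\tfrac{1}{2\ell}\gamma_a$), whose parallel sections encode the supersymmetries of the background. For a smooth spinor $\epsilon$ on $\Sigma$ satisfying the supercovariant Dirac equation $\gamma^a\hat{\nabla}_a\epsilon=0$, the Witten-Nester identity takes the form
\begin{equation}
\int_\Sigma \left( 2|\hat{\nabla}\epsilon|^2 + \mathcal{M}(\epsilon) \right) dV \;=\; \oint_S B[\epsilon] \;-\; \oint_T B[\epsilon],
\end{equation}
where $\mathcal{M}(\epsilon)\ge 0$ pointwise by the local mass-charge inequality \eqref{eq:localbps} combined with the Einstein and Maxwell equations, and $B[\epsilon]$ is the Nester-Witten $2$-form. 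As in the proof of Theorem \ref{thm:inf1}, I would pull back the supercovariantly constant spinor $\epsilon_0$ of the model supersymmetric black hole to $S$ via the given local diffeomorphism and use it as Dirichlet data for the elliptic boundary value problem on $\Sigma$; this choice makes $\oint_S B[\epsilon]$ vanish identically.

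The genuinely new step is the boundary condition on $T$, for which I would use a Dougan-Mason-type null projection. With $\ell^a,n^a$ the outgoing and ingoing null normals to $T$, decompose $\epsilon=\epsilon_++\epsilon_-$ into chiral components under $\gamma^{ab}\ell_a n_b$, and impose the intrinsic two-dimensional Dirac equation $D_T\epsilon_+=0$ on $T$. This is a self-adjoint elliptic boundary condition of the type analysed by Dougan and Mason \cite{douganquasi}; its combination with the Dirichlet condition at $S$ is still well-posed modulo a finite-dimensional kernel which can be ruled out by the same energy estimate, or else absorbed into a perturbation of the $S$-data. The payoff is that, after using $D_T\epsilon_+=0$ to integrate by parts intrinsically on $T$, the inner boundary integrand $-B[\epsilon]|_T$ reduces to $-\theta_\ell$ times a non-negative quadratic in $\epsilon$, plus a Maxwell piece whose sign is controlled by the charge bound inherited from the supersymmetric model on $S$. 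Since $T$ is outer trapped, $\theta_\ell\le 0$, and this contribution sits on the \emph{same} side of the identity as the manifestly non-negative bulk integrand.

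The identity then reads ``sum of non-negatives $=$ zero,'' forcing $\hat{\nabla}_a\epsilon\equiv 0$ on $\Sigma$, $\mathcal{M}(\epsilon)=0$ pointwise, and $\theta_\ell|\epsilon_+|^2\equiv 0$ on $T$. The main obstacle is converting this rigidity into a geometric contradiction. Because $\epsilon$ agrees on $S$ with a nowhere-vanishing supercovariantly constant spinor of the model, unique continuation for the first-order system $\hat{\nabla}\epsilon=0$ propagates non-vanishing of $|\epsilon|^2$ throughout $\Sigma$, so $|\epsilon|^2>0$ on $T$ and therefore $\theta_\ell\equiv 0$ on $T$. The bilinears of $\epsilon$ — the causal Killing vector $V^a=\bar\epsilon\gamma^a\epsilon$ together with the Maxwell-compatible $2$-forms built from $\epsilon$ — then rigidly identify $T$ with a cross-section of a supersymmetric Killing horizon, which is incompatible with $T$ being the inner boundary of a spacelike $\Sigma$ whose outer boundary $S$ is also such a cross-section. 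This final rigidity/classification step, which had no analogue in \cite{reall2025lawblackholemechanics} or Theorem \ref{thm:inf1} because the inner boundary $T$ was absent there, is where I expect the substantive new work to be concentrated.
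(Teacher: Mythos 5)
Your overall architecture --- a Witten--Nester identity over $\Sigma$ with an extra boundary term at $T$ whose sign is controlled by the trapping condition, in the spirit of Gibbons--Hawking--Horowitz--Perry --- is the same as the paper's. But there are concrete gaps. First, the inner boundary condition: the paper imposes the \emph{algebraic} GHHP projection $\bar{\tilde{\lambda}}_{0'}=\bar{\tilde{\mu}}_{0'}=0$ on $T$ (equivalently $(\gamma^0-\gamma^3)\tilde\epsilon=0$), which makes the inner boundary term reduce cleanly to $-\hat{I}_T[\tilde\epsilon]=\int_T\rho\,(|\tilde\lambda_1|^2+|\tilde\mu_1|^2)\ge 0$ using only $\rho\ge 0$, and which pairs with the Dougan--Mason condition at $S$ to give self-adjoint boundary conditions for the Fredholm/index argument establishing existence of the Sen--Witten spinor. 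Your \emph{differential} condition $D_T\epsilon_+=0$ is a different animal: you have not verified that it actually produces ``$-\theta_\ell$ times a non-negative quadratic'' after integration by parts, and ``absorbed into a perturbation of the $S$-data'' is not a well-posedness proof. Relatedly, in the equality case you obtain only $\theta_\ell|\epsilon_+|^2\equiv 0$ on $T$; to conclude $\theta_\ell\equiv 0$ you need $\epsilon_+$ nowhere zero on $T$, and unique continuation gives you only $|\epsilon|^2>0$. With the paper's algebraic condition this is automatic ($\epsilon_+=0$ at a point of $T$ forces $\epsilon=0$ there, hence everywhere by supercovariant constancy, contradicting non-vanishing on $S$); with yours it is not. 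You also omit the hypothesis $\rho'>0$ on $S$, which is needed to pass from the holomorphic spinor $\epsilon$ to the Sen--Witten spinor $\tilde\epsilon$ via \eqref{eq: equiv to 21 in 1.5}.

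Second, the endgame is misidentified. Once you know $\theta_\ell\equiv 0$ on $T$ you are \emph{done}: $T$ was assumed (strictly) outer trapped, so this is already the contradiction --- exactly point (iv) of Theorem \ref{thm:equiv to 2.6 but with inner boundary}. The ``rigidity/classification step'' you flag as the substantive new work is both unnecessary and, as stated, false: the paper points out that in a genuine supersymmetric adS black hole spacetime there \emph{do} exist compact spacelike $\Sigma$ whose inner boundary is a marginally outer trapped horizon cross-section, so identifying $T$ with such a cross-section yields no contradiction by itself. The genuinely new work in the paper lies elsewhere: for $\Lambda<0$ it proves the stronger statement that $S$ cannot be supersymmetric even when $T$ is only \emph{weakly} outer trapped, and this requires extending the spinor to the Killing development of $\Sigma$, a Komar-type identity over $S\cup T$ (whose boundary terms vanish because $V=0$ on both $S$ and $T$), and the algebraic constraint \eqref{eq:result for theorem 2.8 introduction} on the matter stress tensor in a supersymmetric spacetime --- none of which appears in your proposal.
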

A precise version of this result is stated as Corollary \ref{cor:3rd law with inner boundary} below. 

The main ingredient used to obtain all of our results is the idea of a ``supersymmetric surface'' \cite{reall2025lawblackholemechanics}. This is a spacelike 2-surface on which there exists a non-trivial spinor satisfying the tangential components of the supercovariantly constant condition. In a non-supersymmetric spacetime, if there exists a local diffeomorphism mapping the (spacetime) metric, extrinsic curvature and Maxwell field on a 2-surface $S$ to the corresponding quantities on a 2-surface in a supersymmetric spacetime, then $S$ is a supersymmetric surface. Ideas similar to the quasilocal mass construction of \cite{douganquasi} can be used to prove a rigidity property for such surfaces in spacetimes satisfying \eqref{eq:localbps}: if $\Sigma$ is a 3-surface with $\partial \Sigma = S$ (or $\partial \Sigma = S\cup T$ with $T$ weakly outer trapped) then there exists a supercovariantly constant spinor on $\Sigma$, which strongly constrains the form of the spacetime on $\Sigma$. See Theorems \ref{thm:equiv to 2.6} and \ref{thm:equiv to 2.6 but with inner boundary} below. If the local diffeomorphism maps $S$ to a horizon cross-section of a supersymmetric black hole, then $S$ has the additional property of being marginally trapped, which leads to even stronger rigidity statements given in Theorems \ref{thm:equiv to 2.8} and \ref{thm:equiv to 2.8 but with inner boundary}, with Theorems \ref{thm:inf1} and \ref{thm:inf2} arising as corollaries of these theorems.

This paper is organized as follows. In section \ref{sec:prelim} we introduce our conventions, notation and some background material. In section \ref{sec:main} we start by introducing the idea of a supersymmetric surface $S$ and then study the situation where there exists a compact spacelike surface $\Sigma$ with $\partial \Sigma = S$. This section establishes the rigidity results of Theorems \ref{thm:equiv to 2.6} and \ref{thm:equiv to 2.8}, with Theorem \ref{thm:inf1} above arising as a corollary. In section \ref{sec:trapped surfaces} we consider the generalisation where $\partial \Sigma = S\cup T$ with $T$ a weakly outer trapped surface, establishing Theorems \ref{thm:equiv to 2.6 but with inner boundary} and \ref{thm:equiv to 2.8 but with inner boundary}, with Theorem \ref{thm:inf2} above arising as a corollary. There are $3$ Appendices covering technical material used in some of the proofs.

\begin{figure*}[t!]
    \centering
    \begin{subfigure}[t]{0.5\textwidth}
        \centering
        \includegraphics[width=8cm,height = 5.25cm]{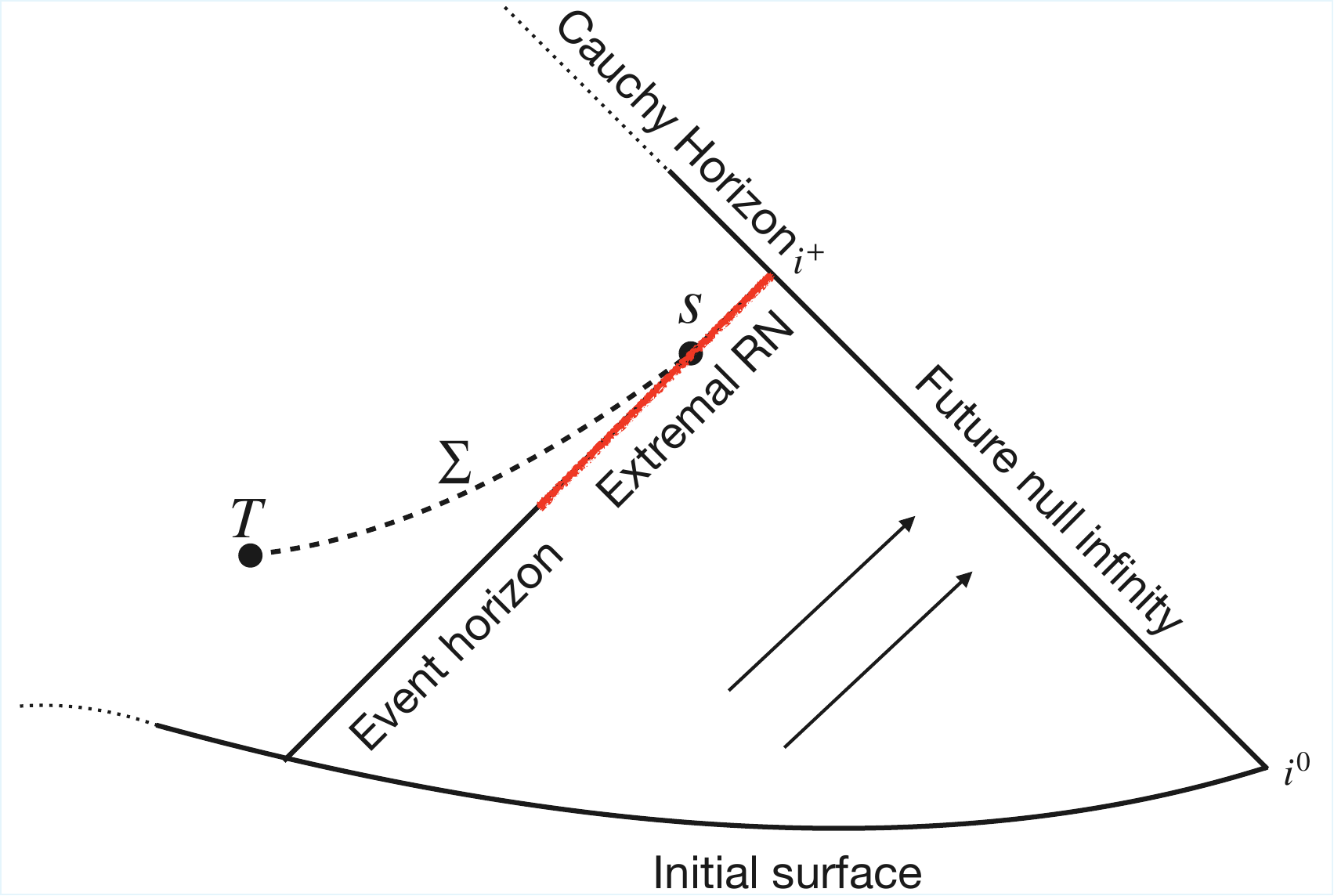}
        \caption{}
    \end{subfigure}%
    ~
    \begin{subfigure}[t]{0.5\textwidth}
        \centering
        \includegraphics[width=5.6cm, height = 5.5cm]{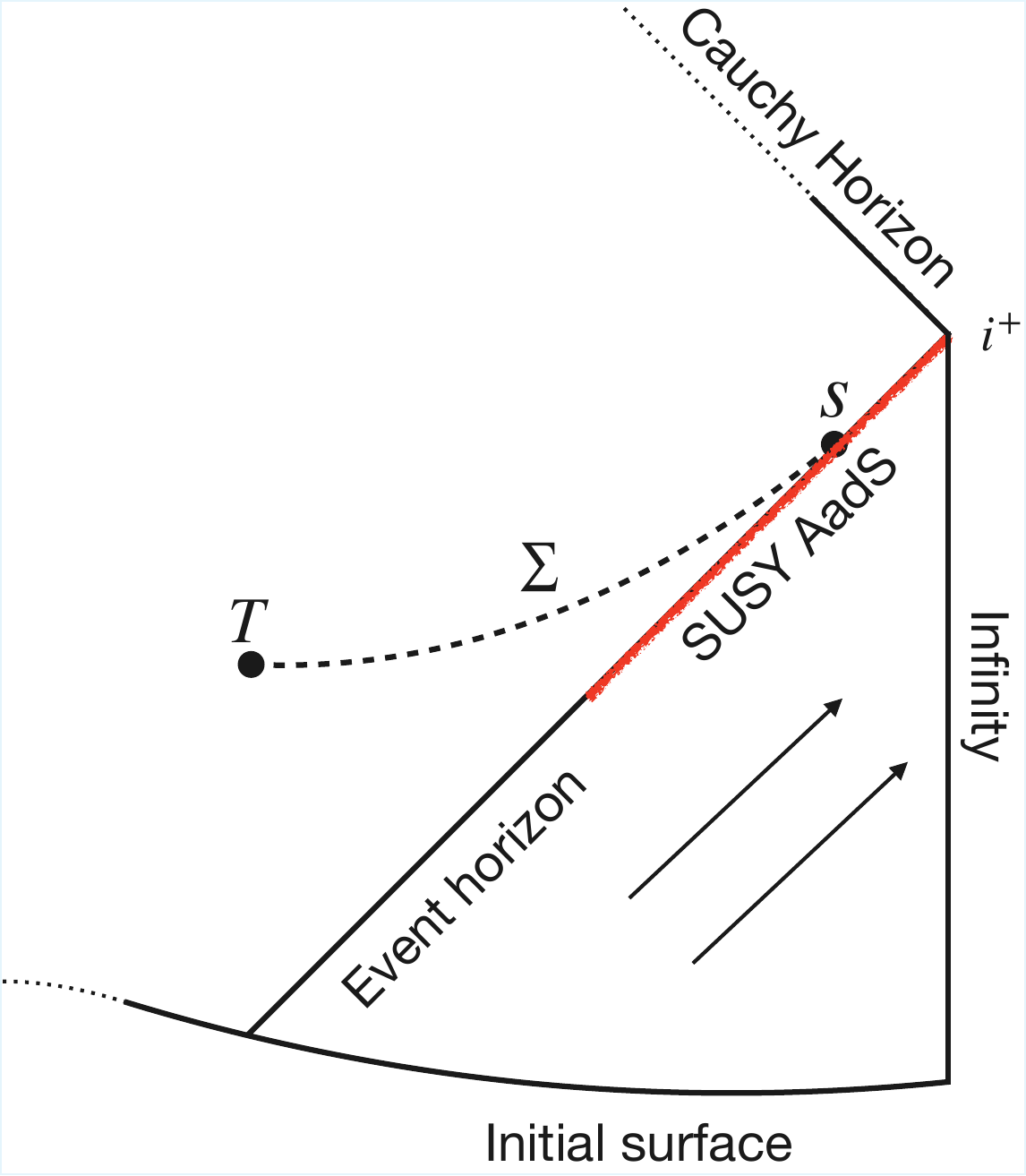}
        \caption{}
    \end{subfigure}
    \caption{(a) Penrose diagram showing a third law violating spacetime in which an initially non-extremal black hole becomes extremal but without assuming that the initial black hole was formed in gravitational collapse. The initial surface might extend into another asymptotically flat region to the left of the figure. The presence of a trapped surface $T$ shows that the initial black hole is non-extremal. The final black hole coincides with extremal Reissner-Nordstr\"om in the red-shaded region. The spacelike surface $\Sigma$ extends from $T$ to a Reissner-Nordstr\"om horizon cross-section $S$. We prove that such a surface cannot exist, and hence such a spacetime cannot exist, if matter satisfies the local mass-charge inequality. (b) The corresponding diagram for negative cosmological constant, where the final black hole is a supersymmetric asymptotically adS black hole. 
    }
\label{fig:inner_bdy}
\end{figure*}

\section{Preliminaries}
\label{sec:prelim}

\subsection{Conventions}

As in \cite{reall2025lawblackholemechanics}, we shall use the Newman-Penrose formalism \cite{penrose1984spinorsvol1} and so it is convenient to work with a negative signature metric. Gamma matrices satisfy $\{\gamma^a,\gamma^b\} = 2g^{ab}$  where $a,b \in \{0,1,2,3\}$ refer to an orthonormal basis. We take $\gamma^0$ hermitian and $\gamma^i$ ($i=1,2,3$) anti-hermitian.

\subsection{Equations of motion}\label{sec:setup}

We will consider Einstein-Maxwell theory coupled to charged matter in a four-dimensional spacetime in the presence of a negative cosmological constant:
\be
\Lambda = -6 K^2
\ee
with $K>0$. The matter has energy-momentum tensor $T^{(m)}_{ab}$, and electric current density $J_a$. The matter is assumed to satisfy \eqref{eq:localbps} which can also be written as \cite{reall2025lawblackholemechanics}
\begin{equation} \label{eq:local mass-charge inequality}
    T^{(m)}_{ab}V^aV^b \geq 0, \qquad T^{(m)}_{ab}V^b T^{(m)}{^a}{_c}V^c \geq (J_aV^a)^2,
\end{equation}
where $V$ is any causal vector. The Einstein equation is
\begin{equation} \label{eq:efe}
    R_{ab} - \frac{1}{2}Rg_{ab} - \Lambda g_{ab} = 8\pi\big(T^{MW}_{ab} + T_{ab}^{(m)}\big),
\end{equation}
where
\begin{equation}
    T^{MW}_{ab} = -\frac{1}{4\pi}\Big( F_a{^c}F_{bc} - \frac{1}{4}F_{cd}F^{cd} g_{ab} \Big).
\end{equation}
Maxwell's equations are:
\begin{equation}\label{eq:maxwell}
    d\star F = 4\pi \star J,\qquad dF = 0
\end{equation}
The final equation implies that, locally, we can introduce a gauge potential $A_a$ such that $F=dA$.

\subsection{Newman-Penrose formalism}

\label{sec:NP}

We denote a spin frame as $\{o^A, \iota^A\}$, where $o_A \iota^A = 1$. A Weyl spinor can be written in this frame as $\lambda^A = \lambda^0 o^A + \lambda^1 \iota^A$.
This gives $\lambda_A = -\lambda_1o_A + \lambda_0 \iota_A$, where $\lambda_0 = \lambda^1$ and $\lambda_1 = -\lambda^0$. The complex conjugate spin frame is $\{\bar{o}^{A{'}},\bar{\iota}^{A{'}}\}$.


A spin frame induces a Newman-Penrose (NP) tetrad \cite{penrose1984spinorsvol1}, $\{l^a,n^a,m^a,\bar{m}^a\}$ according to
\begin{equation}
    l^a = o^A \bar{o}^{A{'}}, \qquad n^a = \iota^A \bar{\iota}^{A{'}}, \qquad m^a = o^A \bar{\iota}^{A{'}}, \qquad \bar{m}^{a}=\iota^A \bar{o}^{A{'}},
\end{equation}
where $l^a$ and $n^a$ are real null vectors, while $m^a$ is a complex null vector whose real and imaginary parts are spacelike and orthogonal. The connection components in such a tetrad are described by the NP scalars.

From such a NP tetrad, one can construct an orthonormal basis:
\begin{equation}
\begin{split}
        e_{0}{^a} &= \frac{1}{\sqrt{2}}(l^a + n^a), \text{  } e_{3}{^a} = \frac{1}{\sqrt{2}}(l^a-n^a),\\
        e_{1}{^a} &= \frac{1}{\sqrt{2}}(m^a + \bar{m}^a), \text{    } e_{2}{^a} = \frac{1}{\sqrt{2}}i(\bar{m}^a - m^a).
\end{split}
\end{equation}
In what follows, unless stated otherwise, the indices $0,1,2,3$ will refer to this orthonormal basis.

We will often consider a compact spacelike 3-surface $\Sigma$ with boundary $\partial \Sigma = S \cup T$ where $S$ is the ``outer'' boundary of $\Sigma$ (which we assume to be connected) and $T$ is the ``inner'' boundary of $\Sigma$ (which might be empty or might have multiple connected components). On a spacelike 2-surface, $S$ or $T$, there are precisely two future-directed null directions normal to the surface. We choose our NP tetrad so that $l^a$ and $n^a$ are normal to $S$ and $T$, with $l^a$ pointing outwards (i.e. out of $\Sigma$ on $S$, into $\Sigma$ on $T$) and $n^a$ pointing inwards. 

In the compact notation of \cite{geroch1973space}, the NP variables of interest to us are $\rho,\sigma,\rho{'},\sigma{'}$. On $S$ or $T$, $\rho$ and $\rho'$ are real. $\rho, \sigma$ describe the expansion and shear of outgoing null geodesics and $\rho',\sigma'$ describe the expansion and shear of ingoing null geodesics. The signs are such that positive $\rho$ or $\rho'$ corresponds to converging geodesics. We say that $S$ or (a connected component of) $T$ is {\it trapped} if it has $\rho'>0$ and $\rho>0$, {\it outer trapped} if it has $\rho>0$, {\it weakly outer trapped} if it has $\rho \ge 0$ and {\it marginally outer trapped} if it has $\rho \equiv 0$. 

\subsection{Supercovariant derivative on spinors}\label{sec:derivatives of spinors}

We will be using spinorial methods motivated by supersymmetry. In 4d gauged supergravity the supersymmetry transformation parameter is a spinor field $\epsilon$ charged under the $U(1)$ gauge field \cite{FREEDMAN1977221}. The supercovariant derivative acting on such a spinor is defined as \cite{Romans_1992}

\begin{equation}\label{eq:supercovariant derivative on dirac spinor}
    \hat{\nabla}_a \epsilon \coloneqq D_a \epsilon + i\frac{K}{\sqrt{2}}\gamma_a \epsilon + \frac{1}{4} F_{bc}\gamma^b\gamma^c\gamma_a \epsilon,
\end{equation}
where $D_a \coloneqq \nabla_a -i\sqrt{2}K A_a$ is a $U(1)$-gauge covariant derivative.
A Dirac spinor decomposes into a pair of Weyl spinors: $\epsilon = (\lambda^A,\bar{\mu}_{A{'}})$ (note the difference in the placement of the overbar compared with \cite{reall2025lawblackholemechanics}).
The spinor supercovariant derivative can be written in terms of these Weyl spinors as
\begin{equation} \label{eq:supercovariant derivative on weyl spinors}
    \hat{\nabla}_{AA{'}}\lambda_{B} = D_{AA{'}}\lambda_{B} + (K\epsilon_{AB} + \sqrt{2}\phi_{AB}) \bar{\mu}_{A{'}}  \qquad\text{ and  } \qquad\hat{\nabla}_{AA{'}}\bar{\mu}_{B^{'}} =D_{AA{'}}\bar{\mu}_{B^{'}} + (K\epsilon_{A{'}B^{'}} - \sqrt{2}\bar{\phi}_{A{'}B^{'}}) \lambda_A,
\end{equation}
where $\phi_{AB}$ is the symmetric spinor describing the Maxwell field \cite{penrose1984spinorsvol1}:
\begin{equation}
    F_{ab} = \phi_{AB}\epsilon_{A{'}B^{'}} + c.c.
\end{equation}

Since our spinor is charged w.r.t. the $U(1)$ gauge field, it is convenient to define gauge-covariant modifications $\ethmm$ and $\ethmmbar$ of the GHP \cite{geroch1973space} derivative operators $\ethm$ and $\bar{\ethm}$ (the subscript ``c'' refers to ``charged''). These are defined by 
\begin{equation}
\label{eq:ethmc}
    \ethmm \coloneqq \ethm - i\sqrt{2}K (m\cdot A) \text{    and    } \ethmmbar \coloneqq \bar{\ethm} - i\sqrt{2}K (\bar{m}\cdot A).
\end{equation}
The complex conjugates of these operators are
\begin{equation}
    \bar{\ethmm} \coloneqq \bar{\ethm} + i\sqrt{2}K (\bar{m}\cdot A) \text{    and    } \bar{\ethmmbar} \coloneqq \ethm + i\sqrt{2}K (m\cdot A).
\end{equation}
These operators satisfy the following equation on any closed, spacelike 2-surface $S$ (with $m^a$, $\bar{m}^a$ tangent to $S$)
\begin{equation}\label{eq:integration by parts for gauge covariant eth}
    \int_S \lambda_1 (\ethmm \bar{\lambda}_{0{'}}) = - \int_S \bar{\lambda}_{0{'}} \bar{\ethmmbar}\lambda_1 ,
\end{equation}
and likewise for $\ethmmbar$.

\subsection{Vector and scalar fields}

From a Dirac spinor $\epsilon = (\lambda^A,\bar{\mu}_{A{'}})$  we can construct a vector and a complex scalar \cite{tod1983all}. In terms of 2-spinors:
\begin{equation} \label{eq:vector formed from spinors}
    X^a = \frac{1}{\sqrt{2}}(\bar{\lambda}^{A{'}}\lambda^{A} + \bar{\mu}^{A{'}}\mu^{A}) \qquad V = \lambda_A\mu^A.
\end{equation}
which satisfy
\begin{equation}
    X_aX^a = V\bar{V}
\end{equation}
and so $X^a$ is causal (or zero). $V$ vanishes if, and only if, $\lambda_A$ and $\mu_A$ are linearly dependent. $X^a$ vanishes if, and only if, $\lambda_A$ and $\mu_A$ vanish.

\section{Third law for supersymmetric black holes formed in gravitational collapse} \label{sec:main}

\subsection{Supersymmetric surfaces}

Following \cite{reall2025lawblackholemechanics} it is convenient to define
\begin{definition} [Supersymmetric Surface]
    Let $S$ be a smooth, connected, spacelike 2-surface in a smooth spacetime containing a Maxwell field. $S$ is a {\rm supersymmetric surface} if there exists a not identically zero Dirac spinor field, $\epsilon$, defined on $S$ such that, for any vector field $t^a$ tangent to $S$, $t^a\hat{\nabla}_a \epsilon = 0$.
\end{definition}
A standard argument shows that such a spinor field must be non-vanishing everywhere on $S$  \cite{reall2025lawblackholemechanics}. Note that this definition depends on the cosmological constant through the definition of the supercovariant derivative.

A {\it spacetime} is said to be supersymmetric if it admits a globally defined supercovariantly constant spinor. Any smooth, connected, spacelike 2-surface in such a spacetime is a supersymmetric surface. However, we are interested in the possible existence of supersymmetric surfaces in non-supersymmetric spacetimes. We want to consider spacetimes describing gravitational collapse to form a black hole that is ``supersymmetric on the horizon'' at late time. A late-time horizon cross-section of such a black hole would be an example of a supersymmetric surface in a non-supersymmetric spacetime.

Given a supersymmetric surface $S$, if we pick a NP basis adapted to $S$ (i.e. so that $m^a$ and $\bar{m}^a$ are tangential to $S$) then
the condition of supersymmetry of $S$ is $ \bar{m}^a\hat{\nabla}_a \epsilon=m^a \hat{\nabla}_a \epsilon  = 0$. In terms of Weyl spinors, using the modified GHP derivative operators \eqref{eq:ethmc}, the first of these equations is

\begin{subequations}
    \begin{equation}\label{eq:m susy condition a}
        \ethmmbar \lambda_{1} + \sigma{'}\lambda_{0} + \sqrt{2}\phi_{11} \bar{\mu}_{1{'}} = 0, \qquad \ethmmbar \bar{\mu}_{1^{'}} + \rho{'} \bar{\mu}_{0^{'}} + (-K - \sqrt{2}\bar{\phi}_{0^{'}1^{'}})\lambda_1 =0,
    \end{equation}
    \begin{equation}\label{eq:m susy condition b}
        \ethmmbar \lambda_{0} + \rho \lambda_{1} + (K + \sqrt{2}\phi_{01})\bar{\mu}_{0{'}}= 0, \qquad \ethmmbar\bar{\mu}_{0^{'}} + \bar{\sigma}\bar{\mu}_{1^{'}} - \sqrt{2} \bar{\phi}_{0^{'}0^{'}}\lambda_0 = 0,
    \end{equation}
\end{subequations}
while the second is
\begin{subequations}
    \begin{equation}\label{eq:m-bar susy condition a}
        \ethmm\lambda_{1} + \rho{'} \lambda_{0} + (-K + \sqrt{2}\phi_{01})\bar{\mu}_{1{'}} = 0, \qquad \ethmm \bar{\mu}_{1^{'}} + \bar{\sigma}{'}\bar{\mu}_{0^{'}} - \sqrt{2}\bar{\phi}_{1^{'}1^{'}} \lambda_1 = 0,
    \end{equation}
    \begin{equation}\label{eq:m-bar susy condition b}
        \ethmm\lambda_{0} + \sigma\lambda_{1} + \sqrt{2} \phi_{00}\bar{\mu}_{0{'}} = 0, \qquad \ethmm \bar{\mu}_{0^{'}} + \rho \bar{\mu}_{1^{'}} + (K - \sqrt{2}\bar{\phi}_{0^{'}1^{'}})\lambda_0 = 0.
    \end{equation}
\end{subequations}

Supersymmetric surfaces are special and so the above equations do not admit a non-trivial solution on a general surface in a general spacetime. However, following ideas of Dougan and Mason \cite{douganquasi}, it will be convenient to impose a subset of these equations on such a surface, for which non-trivial solutions do exist. The approach of \cite{douganquasi} is to consider holomorphic Weyl spinors on $S$, defined as those in the kernel of $\bar{m}^a \nabla_a$, or anti-holomorphic spinors - those in the kernel of $m^a \nabla_a$. We will consider a gauge-covariant and supercovariant modification of a subset of the equations following from the holomorphic condition on $\bar{\lambda}_{A{'}}$ and $\bar{\mu}_{A{'}}$, given by\footnote{
An alternative approach would be to start from a holomorphic/anti-holomorphic condition on the Dirac spinor $\epsilon = (\lambda^A,\bar{\mu}_{A{'}})$, which would be equivalent to requiring $\bar{\lambda}_{A{'}}$ be anti-holomorphic/holomorphic while $\bar{\mu}_{A{'}}$ be holomorphic/anti-holomorphic. This approach has been adopted in the recent paper \cite{rallabhandi2025spinorialquasilocalmassspacetimes} which defines a quasi-local mass for spacetimes with negative cosmological constant.}
    \begin{equation} \label{eq:holomorph a}
        \bar{\ethmm}\bar{\lambda}_{1^{'}} + \rho{'} \bar{\lambda}_{0^{'}} + (-K + \sqrt{2}\bar{\phi}_{0^{'}1^{'}})\mu_1 = 0 \qquad \ethmmbar\bar{\mu}_{1^{'}} + \rho{'} \bar{\mu}_{0^{'}} + (-K - \sqrt{2}\bar{\phi}_{0^{'}1^{'}})\lambda_1 = 0,
    \end{equation}
For the main results of this paper, we will only impose (\ref{eq:holomorph a}) on surfaces with $\rho'>0$. This equation can always be solved on such a surface: it just determines $\bar{\lambda}_{0'}$ and $\bar{\mu}_{0'}$ in terms of $\bar{\lambda}_{1'}$ and $\bar{\mu}_{1'}$.

As in \cite{gibbons1982bogomolny} and \cite{reall2025lawblackholemechanics} we introduce a supercovariant generalisation of the Nester-Witten two-form \cite{nester1981new}:
\begin{equation} \label{eq:generalisation of nester-witten two-form}
    \hat{\Lambda}(\epsilon) \coloneqq \text{Re}\Big\{(-i) \Big(\lambda_B\hat{\nabla}_a\bar{\lambda}_{B^{'}} + \mu_B\hat{\nabla}_a\bar{\mu}_{B^{'}}\Big)\Big\} dx^a \wedge dx^b.
\end{equation}
Using this, we define a functional over the space of smooth spinor fields on $S$ as
\begin{equation}
\label{ISdef}
    \hat{I}_{S}[\epsilon] \coloneqq \int_S \hat{\Lambda}(\epsilon),
\end{equation}
which we henceforth refer to as the Nester-Witten functional (over $S$).
Expanding this out in our NP tetrad gives
\begin{equation} \label{eq:nw functional}
\begin{split}
    \hat{I}_{S}[\epsilon] = \text{Re} \int_S \left\{ \lambda_1 \Big[\bar{\ethmmbar} \bar{\lambda}_{0^{'}} + \rho \bar{\lambda}_{1^{'}} + (K + \sqrt{2}\bar{\phi}_{0^{'}1^{'}})\mu_0\Big] - \lambda_0 \Big[\bar{\ethmm}\bar{\lambda}_{1^{'}} + \rho{'} \bar{\lambda}_{0^{'}} + (-K + \sqrt{2}\bar{\phi}_{0^{'}1^{'}})\mu_1\Big] \right. + \\ \left.
    \mu_1 \Big[\ethmm \bar{\mu}_{0^{'}} + \rho \bar{\mu}_{1^{'}} + (K - \sqrt{2}\bar{\phi}_{0^{'}1^{'}})\lambda_0 \Big] - \mu_0 \Big[\ethmmbar\bar{\mu}_{1^{'}} + \rho{'} \bar{\mu}_{0^{'}} + (-K - \sqrt{2}\bar{\phi}_{0^{'}1^{'}})\lambda_1 \Big] \right\}.
\end{split}
\end{equation}
Following \cite{reall2025lawblackholemechanics}, we now state and prove a series of simple lemmas regarding the Nester-Witten functional.
\begin{lemma}\label{lemma:zero on susy surface}
    If $S$ is supersymmetric with spinor field $\epsilon$ then $\hat{I}_{S}[\epsilon] =0$.
\end{lemma}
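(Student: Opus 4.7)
The plan is to show that the integrand of $\hat{I}_S[\epsilon]$, as written out in the NP expansion \eqref{eq:nw functional}, vanishes pointwise on $S$ whenever $\epsilon$ is a supercovariantly constant spinor along $S$. This is purely algebraic: no integration by parts is needed, and the full strength of supersymmetry of $\epsilon$ is used.

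The integrand of \eqref{eq:nw functional} is a real part of a sum of four expressions, each of which is a component of $\lambda_A$ or $\mu_A$ multiplied by a bracketed expression. The strategy is to recognise each bracket as the left-hand side of one of the eight supersymmetry equations \eqref{eq:m susy condition a}--\eqref{eq:m-bar susy condition b}, up to complex conjugation. Explicitly, I would match:
\begin{itemize}
\item the bracket $\bar{\ethmmbar}\bar{\lambda}_{0'} + \rho \bar{\lambda}_{1'} + (K + \sqrt{2}\bar{\phi}_{0'1'})\mu_0$ with the complex conjugate of the first equation of \eqref{eq:m susy condition b};
\item the bracket $\bar{\ethmm}\bar{\lambda}_{1'} + \rho' \bar{\lambda}_{0'} + (-K + \sqrt{2}\bar{\phi}_{0'1'})\mu_1$ with the complex conjugate of the first equation of \eqref{eq:m-bar susy condition a};
\item the bracket $\ethmm \bar{\mu}_{0'} + \rho \bar{\mu}_{1'} + (K - \sqrt{2}\bar{\phi}_{0'1'})\lambda_0$ with the second equation of \eqref{eq:m-bar susy condition b};
\item the bracket $\ethmmbar \bar{\mu}_{1'} + \rho' \bar{\mu}_{0'} + (-K - \sqrt{2}\bar{\phi}_{0'1'})\lambda_1$ with the second equation of \eqref{eq:m susy condition a}.
\end{itemize}
For the first two identifications one uses that $\bar{\ethmm}$ and $\bar{\ethmmbar}$ are by construction the complex conjugates of $\ethmm$ and $\ethmmbar$ respectively, and that the gauge-covariant derivatives act on the conjugated components $\bar{\lambda}_{0'}, \bar{\lambda}_{1'}$ in exactly the conjugated way. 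A small check is that the coefficients $\rho, \rho'$ are real on $S$ (noted in section \ref{sec:NP}) and that $\bar{\phi}_{0'1'}$ is precisely the complex conjugate of $\phi_{01}$, so conjugating the SUSY equations yields brackets identical to those appearing in \eqref{eq:nw functional}.

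Once these four identifications are made, the proof concludes immediately: supersymmetry of $\epsilon$ on $S$ means precisely that all eight equations \eqref{eq:m susy condition a}--\eqref{eq:m-bar susy condition b} hold pointwise on $S$, hence the four brackets above vanish pointwise on $S$, hence the integrand in \eqref{eq:nw functional} is identically zero on $S$, and so $\hat{I}_S[\epsilon] = 0$.

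The only real obstacle is bookkeeping: carefully tracking conjugations of the modified GHP operators and the $\pm K$ signs, and confirming that each of the four brackets in \eqref{eq:nw functional} is one of the SUSY conditions rather than, say, a linear combination of them. There is no analytic content beyond this matching, so the proof should be only a few lines once the correspondence is recorded.
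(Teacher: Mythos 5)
Your proof is correct and is essentially the paper's argument made explicit: the paper's proof is the one--line observation that the pull-back of $\hat{\Lambda}(\epsilon)$ to $S$ vanishes by the definition of supersymmetry of $S$, and your four bracket identifications (including the complex conjugations of the first equations of \eqref{eq:m susy condition b} and \eqref{eq:m-bar susy condition a}, using that $\rho,\rho'$ are real on $S$) are precisely the NP-component verification of that statement. There is no difference in substance, only in the level of explicitness.
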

\begin{proof}
    If $S$ is supersymmetric with spinor $\epsilon$, then the pull-back of $\hat{\Lambda}(\epsilon)$ to $S$ is zero by the definition of supersymmetry of $S$.
\end{proof}
\begin{lemma}\label{lemma:equiv to 2.3}
    If $\epsilon$ satisfies (\ref{eq:holomorph a}) on $S$ then 
    \begin{equation}\label{eq:nw functional when m-bar susy condition holds}
        \hat{I}_S[\epsilon] = \int_S \left[ \rho{'}\left(|\lambda_0|^2 + |\mu_0|^2 \right) + \rho \left(|\lambda_1|^2 + |\mu_1|^2 \right)\right]. 
    \end{equation}
\end{lemma}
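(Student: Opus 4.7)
The strategy is to substitute the assumed holomorphic conditions (\ref{eq:holomorph a}) directly into the expanded Nester--Witten functional (\ref{eq:nw functional}), integrate by parts on the two derivative terms that survive, and then observe that the remaining Maxwell and cosmological--constant contributions cancel upon taking the real part.

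First I would notice that the two square brackets labelled by $\lambda_0$ and $\mu_0$ in (\ref{eq:nw functional}) are exactly the left-hand sides of the two equations in (\ref{eq:holomorph a}), so under our hypothesis these contributions vanish identically. What remains is
\begin{equation*}
\hat{I}_S[\epsilon] = \text{Re}\int_S \Big\{ \lambda_1\big[\bar{\ethmmbar}\bar{\lambda}_{0'} + \rho\bar{\lambda}_{1'} + (K+\sqrt{2}\bar{\phi}_{0'1'})\mu_0\big] + \mu_1\big[\ethmm\bar{\mu}_{0'} + \rho\bar{\mu}_{1'} + (K-\sqrt{2}\bar{\phi}_{0'1'})\lambda_0\big]\Big\}.
\end{equation*}
The $\rho$-pieces immediately give $\int_S \rho(|\lambda_1|^2+|\mu_1|^2)$ since $\rho$ is real on $S$ and $\lambda_1\bar{\lambda}_{1'}=|\lambda_1|^2$, etc.

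Next I would integrate by parts the two derivative terms using (\ref{eq:integration by parts for gauge covariant eth}) and its analogues obtained by exchanging $m\leftrightarrow\bar m$ and complex conjugation. This yields
\begin{equation*}
\int_S \lambda_1 \bar{\ethmmbar}\bar{\lambda}_{0'} = -\int_S \bar{\lambda}_{0'}\,\ethmm\lambda_1, \qquad \int_S \mu_1 \ethmm\bar{\mu}_{0'} = -\int_S \bar{\mu}_{0'}\,\bar{\ethmmbar}\mu_1.
\end{equation*}
Now I would take the complex conjugates of the two equations in (\ref{eq:holomorph a}) to obtain expressions for $\ethmm\lambda_1$ and $\bar{\ethmmbar}\mu_1$:
\begin{equation*}
\ethmm\lambda_1 = -\rho'\lambda_0 + (K - \sqrt{2}\phi_{01})\bar{\mu}_{1'}, \qquad \bar{\ethmmbar}\mu_1 = -\rho'\mu_0 + (K + \sqrt{2}\phi_{01})\bar{\lambda}_{1'}.
\end{equation*}
Substituting these produces the desired $\rho'(|\lambda_0|^2 + |\mu_0|^2)$ contribution, together with a set of ``leftover'' terms linear in $K$ and in $\phi_{01},\bar{\phi}_{0'1'}$.

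The main step is verifying that these leftover terms have vanishing real part. Collecting the $K$-terms gives $K\big[(\lambda_1\mu_0 + \mu_1\lambda_0) - (\bar{\lambda}_{0'}\bar{\mu}_{1'} + \bar{\mu}_{0'}\bar{\lambda}_{1'})\big]$, which is $K$ times $z - \bar z$ for $z = \lambda_1\mu_0+\mu_1\lambda_0$, hence purely imaginary. Collecting the $\phi$-terms and writing $w = \lambda_1\mu_0 - \mu_1\lambda_0$, one finds that $\bar{\lambda}_{0'}\bar{\mu}_{1'} - \bar{\mu}_{0'}\bar{\lambda}_{1'} = -\bar w$ (by antisymmetry of the bilinear in the two spinors), so the $\phi$-contribution becomes $\sqrt{2}(\bar\phi_{0'1'}\, w - \phi_{01}\,\bar w)$, also purely imaginary. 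The remaining obstacle is therefore purely bookkeeping: tracking the bars on the GHP operators and ensuring the integration-by-parts formula is applied in the correct direction. Once these two imaginary contributions are identified, taking $\text{Re}$ eliminates them and leaves exactly (\ref{eq:nw functional when m-bar susy condition holds}), completing the proof.
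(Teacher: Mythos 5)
Your proposal is correct and follows essentially the same route as the paper's (much terser) proof: drop the second and fourth brackets of \eqref{eq:nw functional} using \eqref{eq:holomorph a}, integrate the surviving derivative terms by parts via \eqref{eq:integration by parts for gauge covariant eth}, eliminate $\ethmm\lambda_1$ and $\bar{\ethmmbar}\mu_1$ using the complex conjugates of \eqref{eq:holomorph a}, and observe that the residual $K$ and $\phi_{01}$ terms are purely imaginary. Your explicit identification of the leftover terms as $K(z-\bar z)$ and $\sqrt{2}(\bar\phi_{0'1'}w-\phi_{01}\bar w)$ is a correct filling-in of the paper's one-line remark that these terms drop out under $\mathrm{Re}$.
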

\begin{proof}
    Given equation (\ref{eq:holomorph a}) is satisfied by $\epsilon$, both the second and fourth terms on the RHS of equation \eqref{eq:nw functional} are set to zero. 
    Then, integrating by parts using the identity (\ref{eq:integration by parts for gauge covariant eth}), one can again use equation (\ref{eq:holomorph a}) to eliminate the remaining terms involving gauge covariant derivatives. 
    All terms involving $\bar{\phi}_{0^{'}1^{'}}$ and $K$ are imaginary and so drop out when we take the real part, resulting in the above equation.
\end{proof}
\begin{lemma} \label{lemma:equiv to 2.4}
    A supersymmetric surface is not trapped.
\end{lemma}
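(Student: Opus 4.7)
The plan is to chain together the two lemmas just proved. Suppose $S$ is a supersymmetric surface, so there exists a Dirac spinor $\epsilon=(\lambda^A,\bar\mu_{A'})$ on $S$ satisfying the full set of tangential conditions (\ref{eq:m susy condition a})--(\ref{eq:m-bar susy condition b}), and which (by the standard argument cited right after the definition) is nowhere-vanishing on $S$.

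The first thing I would check is that such an $\epsilon$ in particular satisfies the holomorphic-type equations (\ref{eq:holomorph a}). The second equation of (\ref{eq:holomorph a}) is literally the second component of (\ref{eq:m susy condition a}), and the first equation of (\ref{eq:holomorph a}) is obtained from the first component of (\ref{eq:m-bar susy condition a}) by complex conjugation (using $\overline{\ethmm}=\bar{\ethmm}$ and that complex conjugation exchanges unprimed with primed indices in the obvious way). Thus both hypotheses of Lemma \ref{lemma:equiv to 2.3} are met.

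Now I would apply the two previous lemmas to $\epsilon$. Lemma \ref{lemma:zero on susy surface} gives $\hat{I}_S[\epsilon]=0$, while Lemma \ref{lemma:equiv to 2.3} rewrites the same quantity as
\begin{equation*}
\hat{I}_S[\epsilon]=\int_S\left[\rho'\bigl(|\lambda_0|^2+|\mu_0|^2\bigr)+\rho\bigl(|\lambda_1|^2+|\mu_1|^2\bigr)\right].
\end{equation*}
Assume for contradiction that $S$ is trapped, i.e.\ $\rho>0$ and $\rho'>0$ pointwise on $S$. Then the integrand is a sum of non-negative terms with strictly positive coefficients, so its vanishing forces $\lambda_0=\lambda_1=\mu_0=\mu_1=0$ at every point of $S$, i.e.\ $\epsilon\equiv 0$. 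This contradicts the nowhere-vanishing property of the supersymmetric spinor, completing the proof.

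I do not expect a serious obstacle here: the result is essentially a positivity argument that falls out of the identity of Lemma \ref{lemma:equiv to 2.3} combined with $\hat I_S[\epsilon]=0$. The only place a careful reader might hesitate is the bookkeeping that shows (\ref{eq:holomorph a}) really is a subset of the full supersymmetry system (modulo complex conjugation); once that is confirmed, the argument is immediate and parallels the analogous step in \cite{reall2025lawblackholemechanics}.
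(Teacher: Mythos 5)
Your proposal is correct and follows essentially the same route as the paper: invoke Lemma \ref{lemma:zero on susy surface} to get $\hat I_S[\epsilon]=0$, use Lemma \ref{lemma:equiv to 2.3} (noting that \eqref{eq:holomorph a} is a subset of the supersymmetry conditions) to write this as a manifestly non-negative integral, and conclude $\epsilon\equiv 0$ on a trapped surface, contradicting non-triviality of the spinor. Your explicit check that \eqref{eq:holomorph a} follows from the second component of \eqref{eq:m susy condition a} together with the complex conjugate of the first component of \eqref{eq:m-bar susy condition a} is accurate and is a detail the paper only asserts in passing.
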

\begin{proof}
    By definition, a trapped surface has $\rho>0$ and $\rho{'}>0$.
    Assume that $S$ is both supersymmetric and trapped.
    By supersymmetry of $S$ (of which equation (\ref{eq:holomorph a}) is a condition), Lemma \ref{lemma:zero on susy surface} tells us that the LHS of equation (\ref{eq:nw functional when m-bar susy condition holds}) is zero. 
    Since the RHS is non-negative for $S$ trapped, we see that $\epsilon$ must be identically zero on $S$ for equality.
    However, by the definition of $S$ being supersymmetric, $\epsilon$ cannot be identically zero and therefore we have a contradiction.
\end{proof}

If $S$ is marginally outer trapped and the ingoing null geodesics normal to $S$ are strictly converging then we have
\begin{lemma} \label{lemma: equiv to 2.5}
    If $S$ is supersymmetric with $\rho{'}>0$ and $\rho \equiv 0$, then $V \equiv 0$ on $S$ and $X^a$ is an outgoing null normal to $S$.
\end{lemma}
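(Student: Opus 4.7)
The strategy is to apply the two preceding lemmas to pin down the components of $\epsilon$ along $\iota^A$ and then substitute into \eqref{eq:vector formed from spinors}. Supersymmetry of $S$ implies the full system \eqref{eq:m susy condition a}--\eqref{eq:m-bar susy condition b} and in particular the two identities comprising \eqref{eq:holomorph a}, so Lemma \ref{lemma:equiv to 2.3} applies and gives
\begin{equation*}
\hat I_S[\epsilon] = \int_S \big[\rho'(|\lambda_0|^2+|\mu_0|^2) + \rho(|\lambda_1|^2+|\mu_1|^2)\big].
\end{equation*}
Meanwhile Lemma \ref{lemma:zero on susy surface} forces $\hat I_S[\epsilon]=0$. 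Under the hypotheses $\rho\equiv 0$ and $\rho'>0$ the integrand collapses to the pointwise-nonnegative quantity $\rho'(|\lambda_0|^2+|\mu_0|^2)$, and a vanishing integral of a continuous nonnegative function forces $\lambda_0\equiv\mu_0\equiv 0$ on $S$.

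Feeding this into \eqref{eq:vector formed from spinors} is then a short algebraic computation. Using the index conventions $\lambda^0 = -\lambda_1$, $\lambda^1=\lambda_0$ (and similarly for $\mu$) a direct contraction gives $V = \lambda_A\mu^A = \lambda_0\mu_1 - \lambda_1\mu_0 = 0$, which is the first claim. For $X^a$, I would expand the two bispinors in the NP tetrad using $l^a=o^A\bar o^{A'}$, $n^a=\iota^A\bar\iota^{A'}$, $m^a=o^A\bar\iota^{A'}$, $\bar m^a = \iota^A \bar o^{A'}$, writing
\begin{equation*}
\bar\lambda^{A'}\lambda^A = |\lambda^0|^2 l^a + |\lambda^1|^2 n^a + \lambda^0\bar\lambda^{1'} m^a + \lambda^1\bar\lambda^{0'}\bar m^a,
\end{equation*}
and similarly for $\mu$. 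Every term other than the $l^a$ piece involves $\lambda^1=\lambda_0=0$ or $\mu^1=\mu_0=0$ (or their conjugates), so
\begin{equation*}
X^a = \tfrac{1}{\sqrt{2}}\big(|\lambda_1|^2 + |\mu_1|^2\big)\, l^a.
\end{equation*}

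To conclude that this is a genuine (nowhere vanishing) outgoing null normal I would invoke the standard fact stated just after Definition \ref{definition} above (proved in \cite{reall2025lawblackholemechanics}) that a non-trivial tangentially supercovariantly constant spinor on $S$ is pointwise nonzero; combined with $\lambda_0=\mu_0=0$ this forces $|\lambda_1|^2+|\mu_1|^2>0$ everywhere on $S$, so $X^a$ is a strictly positive multiple of the outgoing null normal $l^a$. I do not foresee any substantial obstacle: the statement is essentially an algebraic corollary of Lemmas \ref{lemma:zero on susy surface} and \ref{lemma:equiv to 2.3} together with the pointwise non-vanishing of the supersymmetry spinor, and the only mild care needed is in choosing the correct subset \eqref{eq:holomorph a} of the full supersymmetry equations so that Lemma \ref{lemma:equiv to 2.3} applies to the given $\epsilon$.
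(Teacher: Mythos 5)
Your argument is correct and is essentially the paper's own proof, just written out in more detail: the paper likewise combines Lemma \ref{lemma:zero on susy surface} with Lemma \ref{lemma:equiv to 2.3} (noting that \eqref{eq:holomorph a} is a subset of the supersymmetry conditions) to conclude $\lambda_0\equiv\mu_0\equiv 0$, hence $\lambda_A,\mu_A\propto o_A$, $V\equiv 0$ and $X^a\propto l^a$. Your added observation that the nowhere-vanishing of $\epsilon$ makes $X^a$ a strictly positive multiple of $l^a$ is a correct and useful refinement of the paper's ``$X^a\propto l^a$''.
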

\begin{proof}
    Lemmas \ref{lemma:zero on susy surface} and \ref{lemma:equiv to 2.3} imply $\mu_0 = \lambda_ 0 \equiv 0$ on $S$, i.e., $\mu_A,\lambda_A \propto o_A$ on $S$, hence $V \equiv 0$ and $X^a \propto l^a$.
\end{proof}

\subsection{Non-negativity of the Nester-Witten functional}

Following \cite{reall2025lawblackholemechanics}, we prove the following rigidity result for a compact surface $\Sigma$ whose boundary is supersymmetric:
\begin{theorem} \label{thm:equiv to 2.6}
    Let $\Sigma$ be any smooth, compact, connected, spacelike 3-surface such that matter satisfies the local mass-charge inequality (\ref{eq:local mass-charge inequality}) on $\Sigma$.
    Assume that $\Sigma$ has boundary $\partial \Sigma = S$, where $S$ is a compact, connected 2-surface.
    Assume that the future-directed ingoing null geodesics normal to $S$ are strictly converging on $S$ - i.e. $\rho{'}>0$ on $S$. Let $\epsilon$ be a non-zero solution to (\ref{eq:holomorph a}) on $S$. 
    Then, $\hat{I}_S [\epsilon] \geq 0$ with equality if, and only if, (i) $S$ is supersymmetric with spinor $\epsilon$, (ii) $\epsilon$ extends to a spinor on $\Sigma$ satisfying $h^{b}{_a}\hat{\nabla}_b \epsilon = 0$ where $h^{b}{_a}$ is the projection onto $\Sigma$ and (iii) on $\Sigma$ the charged matter satisfies
    \begin{equation}\label{eq:equiv to eqn 19 in 1.5}
        N^a\Big( T_{ab}^{(m)} X^b - \text{Re}(V)J_a\Big) = 0,
    \end{equation}
    where $N^a$ is a normal to $\Sigma$.
\end{theorem}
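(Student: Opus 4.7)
The plan is the standard spinorial/BPS strategy, following \cite{reall2025lawblackholemechanics}: convert $\hat I_S[\epsilon]$ into a volume integral over $\Sigma$ via Stokes' theorem, rewrite the bulk integrand as a sum of manifestly non-negative quantities using a supercovariant Witten--Nester--Bochner identity together with the Einstein--Maxwell equations, and then read off the equality case.

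I would first extend $\epsilon$ from $S$ into the interior of $\Sigma$ as a solution of the supercovariant Witten equation $h^{b}{_a}\,\hat\nabla_b\epsilon = 0$, where $h^{b}{_a}$ projects onto $\Sigma$. This is a first-order elliptic system of Dirac type on a compact manifold with boundary. The holomorphic-type equation \eqref{eq:holomorph a} imposed on $S$ plays the role of the Dougan--Mason boundary datum: it determines the ``$0'$''-components of $\bar\lambda,\bar\mu$ in terms of the ``$1'$''-components, a chirality/Lopatinski-type boundary condition suitable for Fredholm theory. Existence of a non-trivial extension then follows by standard elliptic arguments (cf.\ \cite{reall2025lawblackholemechanics,douganquasi}), with zero modes excluded a posteriori using the Bochner identity derived below.

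Next I would compute $d\hat\Lambda(\epsilon)$. Expanding \eqref{eq:generalisation of nester-witten two-form} and commuting supercovariant derivatives produces curvature contributions from $[\nabla_a,\nabla_b]$, electromagnetic contributions from the $F$-quadratic piece of $\hat\nabla$ and from its gauge-covariance, and cosmological contributions from the $iK\gamma_a/\sqrt{2}$ piece. Substituting \eqref{eq:efe} and \eqref{eq:maxwell}, the Einstein-tensor piece, the Maxwell stress-energy piece and the $\Lambda$-piece all cancel against contributions generated by the $K$- and $F$-dependent terms in $\hat\nabla$ --- the Gibbons--Hull BPS cancellation \cite{gibbons1982bogomolny} adapted to $\Lambda<0$. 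After integrating over $\Sigma$ and using that the constructed $\epsilon$ satisfies the Witten equation, Stokes' theorem gives, schematically,
\begin{equation}
\hat I_S[\epsilon] \;=\; 8\pi\int_\Sigma N^a\bigl(T^{(m)}_{ab} X^b - \text{Re}(V)\,J_a\bigr)\,d\mathrm{vol}_\Sigma,
\end{equation}
with $N^a$ the future-directed unit normal to $\Sigma$.

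The integrand is pointwise non-negative: $X^a$ is causal future-directed with $X_aX^a = V\bar V$, and applying \eqref{eq:local mass-charge inequality} with $V^a=X^a$ implies that $T^{(m)}_{ab}X^b$ is future-directed causal with Lorentzian norm bounded below by $|J_cX^c|$; a short manipulation (as in \cite{reall2025lawblackholemechanics}) then yields $N^aT^{(m)}_{ab}X^b \ge \text{Re}(V)\,N^aJ_a$. Hence $\hat I_S[\epsilon]\ge 0$. For equality, the matter integrand must vanish pointwise on $\Sigma$, giving (iii); condition (ii) is delivered automatically by the construction of $\epsilon$ in Step 1; and restricting (ii) to $S$ forces $m^a\hat\nabla_a\epsilon = \bar m^a\hat\nabla_a\epsilon = 0$ there, which together with the imposed \eqref{eq:holomorph a} recovers the full set \eqref{eq:m susy condition a}--\eqref{eq:m-bar susy condition b} of supersymmetry equations on $S$, yielding (i). The principal obstacle is the elliptic theory in Step 1 --- verifying ellipticity of the holomorphic boundary data \eqref{eq:holomorph a} and excluding zero modes --- together with the delicate $K$-dependent bookkeeping in Step 2, where the cosmological-constant cancellation depends sensitively on the precise form of the $iK\gamma_a/\sqrt{2}$ term in $\hat\nabla$ (and is the reason magnetic matter charge cannot be accommodated).
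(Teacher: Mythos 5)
Your overall strategy (Stokes, supercovariant Nester--Witten--Bochner identity, BPS cancellation, equality analysis) is the right one and matches the paper in outline, but there are two genuine gaps in the execution. First, you propose to extend $\epsilon$ into $\Sigma$ as a solution of $h^{b}{}_{a}\hat\nabla_b\epsilon=0$ and call this the ``supercovariant Witten equation''. It is not: that equation is the full tangential supercovariant constancy, an overdetermined first-order system with no solutions on a generic $\Sigma$ --- indeed it is precisely conclusion (ii) of the equality case, which must be \emph{derived}, not imposed. The elliptic equation one can actually solve is its Dirac-type trace, the Sen--Witten equation $\gamma^a h_a{}^b\hat\nabla_b\tilde\epsilon=0$. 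Because you impose the stronger equation, your bulk identity is missing the residual term $-2h^{ij}(\hat\nabla_i\tilde\epsilon)^\dagger(\hat\nabla_j\tilde\epsilon)$ that survives after the Sen--Witten equation is used; that term is non-negative (since $h_{ij}$ is negative definite) and its vanishing is exactly what yields (ii) in the equality case. As written, your claim that (ii) is ``delivered automatically by the construction'' is circular.

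Second, you cannot prescribe the \emph{whole} Dirac spinor $\epsilon$ on $S$ as boundary data for a Dirac-type operator on a 3-manifold with boundary: only half of the components (here $\bar\lambda_{1'},\bar\mu_{1'}$) constitute an admissible (Lopatinski-type) boundary condition. The paper therefore introduces a separate spinor $\tilde\epsilon$ solving the Sen--Witten equation with only the $1'$-components matched on $S$, and controls the mismatch in the $0'$-components through the identity
\begin{equation*}
\hat I_S[\epsilon]=\hat I_S[\tilde\epsilon]+\int_S\rho'\bigl(|\tilde\lambda_0-\lambda_0|^2+|\tilde\mu_0-\mu_0|^2\bigr),
\end{equation*}
which is manifestly where the hypothesis $\rho'>0$ enters the positivity argument and where the vanishing of the second term in the equality case shows $\tilde\epsilon$ really is an extension of $\epsilon$. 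In your proposal $\rho'>0$ plays no role in establishing $\hat I_S[\epsilon]\ge 0$, which is a sign that this step is missing. The pointwise non-negativity of the matter term and the recovery of (i) from (ii) restricted to $S$ are handled correctly once these two issues are repaired.
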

\begin{proof}
    We begin by considering a spinor field, $\tilde{\epsilon}$, on $\Sigma$ satisfying the following boundary conditions:
        \begin{equation} \label{eq: boundary conditions on S in 1.5}
            \bar{\tilde{\lambda}}_{1^{'}} = \bar{\lambda}_{1^{'}} \text{ and } \bar{\tilde{\mu}}_{1^{'}} = \bar{\mu}_{1^{'}} \text{ on }\; S.
        \end{equation}
    As in \cite{douganquasi,reall2025lawblackholemechanics} (see also \cite{Rogatko_2002}), after imposing equation (\ref{eq:holomorph a}) on $\epsilon$, we can relate the integrals of the Nester-Witten two-forms of $\epsilon$ and $\tilde{\epsilon}$:
    \begin{equation} \label{eq: equiv to 21 in 1.5}
        \hat{I}_S[\epsilon] = \hat{I}_S[\tilde{\epsilon}] + \int_S \rho{'}\Big(|\tilde{\lambda}_0 - \lambda_0|^2 + |\tilde{\mu}_0 - \mu_0|^2 \Big).
    \end{equation}
    The ingoing null geodesics on $S$ are strictly converging so $\rho{'}>0$ on $S$, hence the non-negativity of $\hat{I}_{S}[\epsilon]$ follows from non-negativity of $\hat{I}_{S}[\tilde{\epsilon}]$.
    To establish the latter, we use Stokes' theorem to obtain $\hat{I}_{S}[\tilde{\epsilon}] = \int_\Sigma d\hat{\Lambda}(\tilde{\epsilon})$. In Appendix \ref{appendix:supercovariant sparling-like identity}, we show, by methods similar to that of \cite{gibbons1982bogomolny, ghhp,witten1981new}, that the RHS can be rearranged as follows:
    \begin{equation}
    \label{eq:sw_id}
    \hat{I}_S[\tilde\epsilon] = \int_\Sigma d\hat\Lambda(\tilde\epsilon) = \int_\Sigma  \Big\{ 8\pi N^a\big(T^{(m)}_{ab}X^b + J_a \text{Re}(V) \big) - 2 h^{ij}(\hat{\nabla}_i\tilde\epsilon)^{\dagger}(\hat{\nabla}_j\tilde\epsilon) - (\gamma^i\hat{\nabla}_i\tilde\epsilon)^{\dagger}(\gamma^j\hat{\nabla}_j\tilde\epsilon)\Big\}.
    \end{equation}
    Here indices $i,j$ refer to spatial basis vectors tangent to $\Sigma$.
    If the local mass-charge inequality holds on $\Sigma$, then all terms but the final term are non-negative (the induced metric $h_{ij}$ is negative-definite).
    The final term is non-positive.
    We now impose the gauge-supercovariant Sen-Witten equation on $\tilde\epsilon$ on $\Sigma$, i.e.
    \begin{equation}\label{eq:gauge-supercovariant sw}
        \gamma^i \hat{\nabla}_i \tilde\epsilon = \gamma^a h{_a}{^b}\hat{\nabla}_b \tilde\epsilon = 0,
    \end{equation}
    subject to boundary conditions (\ref{eq: boundary conditions on S in 1.5}), to obtain
    \begin{equation}\label{eq:non-negativity of nw functional on sw spinors}
        \hat{I}_S[\tilde\epsilon] = \int_\Sigma  \Big\{ 8\pi N^a\big(T^{(m)}_{ab}X^b + J_a \text{Re}(V) \big) - 2 h^{ij}(\hat{\nabla}_i\tilde\epsilon)^{\dagger}(\hat{\nabla}_j\tilde\epsilon)\Big\},
    \end{equation}
    This, therefore, proves the statement $\hat{I}_{S}[\epsilon] \geq 0$. Justification for the existence of a solution to \eqref{eq:gauge-supercovariant sw} with boundary conditions \eqref{eq: boundary conditions on S in 1.5} is given in Appendix A of \cite{reall2025lawblackholemechanics} for the case $\Lambda = 0$ and generalised in Appendix \ref{sec:existence} of this paper.
    
    If $\hat{I}_{S}[\epsilon] =0$ then from equation (\ref{eq: equiv to 21 in 1.5}) we see that as both terms are individually non-negative, they must individually be zero.
    The second term implies that $\epsilon = \tilde{\epsilon}$ on $S$ and hence $\tilde{\epsilon}$ is an extension of $\epsilon$ and we drop the tilde.
    The vanishing of the first term implies the vanishing of \eqref{eq:non-negativity of nw functional on sw spinors}, which is a sum of non-negative terms and therefore each vanishes individually.
    The second term vanishes iff $h^a{_b} \hat{\nabla}_a\epsilon = 0$ on $\Sigma$ (which implies that $S$ is supersymmetric), while the first vanishes iff equation (\ref{eq:equiv to eqn 19 in 1.5}) holds in $\Sigma$. 
    This shows that $\hat{I}_{S}[\epsilon] = 0$ implies (i), (ii) and (iii).
    Conversely, if (i), (ii) and (iii) hold then the above implies that $\hat{I}_S[\epsilon] = 0$.
\end{proof}

An immediate corollary of this result is \cite{reall2025lawblackholemechanics}
\begin{corollary}\label{corollary:equiv to 2.6.1}
    If $\Sigma$ and $S$ satisfy the assumptions of Theorem \ref{thm:equiv to 2.6} and $S$ is supersymmetric, then every smooth 2-surface contained in $\Sigma$ is also supersymmetric and hence, by Lemma \ref{lemma:equiv to 2.4}, not trapped.
\end{corollary}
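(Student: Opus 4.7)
The plan is to apply the equality case of Theorem \ref{thm:equiv to 2.6} directly, using the supersymmetric spinor supplied by the hypothesis on $S$. The first thing I would verify is that \eqref{eq:holomorph a} is contained in the full tangential supercovariant-constancy conditions \eqref{eq:m susy condition a}--\eqref{eq:m-bar susy condition b}: concretely, the second equation of \eqref{eq:holomorph a} is literally the second equation of \eqref{eq:m susy condition a}, while the first equation of \eqref{eq:holomorph a} is obtained by complex-conjugating the first equation of \eqref{eq:m-bar susy condition a}. Consequently, the supersymmetric spinor $\epsilon$ on $S$ qualifies as a non-zero solution of \eqref{eq:holomorph a}, so all hypotheses of Theorem \ref{thm:equiv to 2.6} are met, and Lemma \ref{lemma:zero on susy surface} delivers $\hat I_S[\epsilon]=0$, placing us in the equality case.

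The payoff I would extract is condition (ii) of Theorem \ref{thm:equiv to 2.6}: $\epsilon$ extends to a spinor field on all of $\Sigma$, still denoted $\epsilon$, satisfying $h^{b}{_a}\hat{\nabla}_b\epsilon=0$ throughout $\Sigma$. Now, given any smooth 2-surface $S'\subset\Sigma$, every vector $t^a$ tangent to $S'$ is \emph{a fortiori} tangent to $\Sigma$, so applying the tangential projector to $t$ leaves it unchanged; therefore $t^a\hat{\nabla}_a\epsilon = 0$ on $S'$ for every such $t$. That is exactly the definition of supersymmetry of $S'$ with spinor $\epsilon|_{S'}$, after which Lemma \ref{lemma:equiv to 2.4} delivers the non-trappedness assertion immediately.

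The one point that requires genuine care, and which I see as the main (mild) obstacle, is showing that $\epsilon|_{S'}$ is not identically zero, since the definition of a supersymmetric surface demands a non-trivial spinor. I would handle this by the standard argument that a spinor annihilated by a first-order linear operator along every curve has an open-and-closed zero set. Restricted to any smooth curve in $\Sigma$, the equation $h^{b}{_a}\hat{\nabla}_b\epsilon=0$ yields a first-order linear ODE for $\epsilon$ along that curve, for which solutions are unique. Since $\epsilon$ is non-vanishing on $S\subset\Sigma$ (by the standard remark recorded after the definition of a supersymmetric surface), and $\Sigma$ is connected, $\epsilon$ must be nowhere zero on $\Sigma$, in particular nowhere zero on $S'$. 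Beyond this technicality the corollary is essentially a direct unpacking of the equality case of Theorem \ref{thm:equiv to 2.6}: the content is that supersymmetry on $\partial\Sigma$ propagates tangentially throughout $\Sigma$, and every 2-surface threading $\Sigma$ inherits that tangential supersymmetry automatically.
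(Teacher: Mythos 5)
Your proposal is correct and follows essentially the same route as the paper's own proof: invoke Lemma \ref{lemma:zero on susy surface} to get $\hat I_S[\epsilon]=0$, apply the equality case of Theorem \ref{thm:equiv to 2.6} to extend $\epsilon$ with $h^{b}{}_{a}\hat\nabla_b\epsilon=0$ on $\Sigma$, note that tangent vectors to any $S'\subset\Sigma$ are tangent to $\Sigma$, and rule out zeros of $\epsilon$ by the first-order ODE/connectedness argument. The extra verifications you supply (that \eqref{eq:holomorph a} is a subset of the tangential supersymmetry conditions, and the explicit nowhere-vanishing argument) are points the paper asserts more tersely but relies on in exactly the same way.
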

\begin{proof}
    Since $S$ is supersymmetric (with spinor $\epsilon)$, this implies (by lemma \ref{lemma:zero on susy surface}) that $\hat{I}_S[\epsilon] = 0$.
    Therefore, by the above theorem, $\epsilon$ extends to a spinor such that $h^a{_b}\hat{\nabla}_a \epsilon = 0$ on $\Sigma$ and so $\epsilon$ is supercovariantly constant on any spacelike 2-surface within $\Sigma$.
    Since, $\epsilon$ is nowhere zero on $\Sigma$ (as if it were then $\epsilon$ would have to vanish everywhere on $\Sigma$) and is supercovariantly constant on any spacelike 2-surface in $\Sigma$, every spacelike 2-surface within $\Sigma$ is supersymmetric with spinor $\epsilon$.
\end{proof}


\subsection{The third law}

We can now present the main result of this section.

\begin{theorem} \label{thm:equiv to 2.8}
Consider a smooth spacetime satisfying the Einstein-Maxwell equations with a negative cosmological constant and charged matter satisfying the local mass-charge inequality (\ref{eq:local mass-charge inequality}). Let $\Sigma$ be a smooth, compact, connected, spacelike 3-surface with $\partial \Sigma=S$ where $S$ is a smooth, compact, connected,  marginally outer trapped surface such that the ``ingoing'' future-directed null geodesics normal to $S$ are strictly converging - i.e. $\rho{'}>0$ and $\rho\equiv0$ on $S$. Then $S$ cannot be supersymmetric.
\end{theorem}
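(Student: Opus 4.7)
The plan is a contradiction argument that layers the rigidity of Theorem \ref{thm:equiv to 2.6} on top of the pointwise obstructions in Lemmas \ref{lemma:equiv to 2.4} and \ref{lemma: equiv to 2.5}.

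Suppose $S$ is supersymmetric with spinor $\epsilon$. Then Lemma \ref{lemma:zero on susy surface} gives $\hat I_S[\epsilon]=0$, and since $\rho'>0$ on $S$ the equality case of Theorem \ref{thm:equiv to 2.6} applies: $\epsilon$ extends to $\Sigma$ with $h^b{}_a\hat\nabla_b\epsilon=0$, and by Corollary \ref{corollary:equiv to 2.6.1} every smooth spacelike 2-surface contained in $\Sigma$ is supersymmetric, hence by Lemma \ref{lemma:equiv to 2.4} not trapped. Applying Lemma \ref{lemma: equiv to 2.5} to $S$ itself gives $\lambda_0=\mu_0=0$ on $S$, so $V\equiv 0$ and $X^a\propto l^a$ on $S$, while the standard argument that $\epsilon$ is nowhere zero on a supersymmetric surface forces $(\lambda_1,\mu_1)$ to be nowhere vanishing on $S$.

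To extract a contradiction I would foliate a collar neighbourhood of $S$ in $\Sigma$ by smooth 2-surfaces $S_s$ with $S_0=S$, $s$ measuring inward geodesic distance. By continuity $\rho'|_{S_s}>0$ for small $s$, and non-trappedness (via Corollary \ref{corollary:equiv to 2.6.1}) forces $\rho|_{S_s}\leq 0$. Since each $S_s$ is supersymmetric, Lemma \ref{lemma:equiv to 2.3} yields
\begin{equation*}
\int_{S_s}\!\bigl[\rho'(|\lambda_0|^2+|\mu_0|^2)+\rho(|\lambda_1|^2+|\mu_1|^2)\bigr]=0.
\end{equation*}
The tangential supercovariant constancy on $\Sigma$ contains, in addition to the tangential SUSY conditions on $S$, the inward-direction equation $(l-n)^a\hat\nabla_a\epsilon=0$; solving this with the initial data $\lambda_0=\mu_0=0$ on $S$ gives $|\lambda_0|^2+|\mu_0|^2=O(s^2)$ near $s=0$, while $|\lambda_1|^2+|\mu_1|^2$ is $O(1)$. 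Expanding $\rho|_{S_s}=s\,\partial_s\rho|_S+O(s^2)$, the order-$s$ part of the identity above reads $\partial_s\rho|_S\int_{S}(|\lambda_1|^2+|\mu_1|^2)=0$, forcing $\partial_s\rho|_S\equiv 0$; iterating at higher orders in $s$, together with $\rho\leq 0$ on each $S_s$ and smoothness, one expects to be driven to $\rho\equiv 0$ on the collar. Applying Lemma \ref{lemma: equiv to 2.5} pointwise to each $S_s$ then gives $\lambda_0=\mu_0\equiv 0$ throughout the collar.

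Finally I would feed $\lambda_0=\mu_0\equiv 0$ and the inward evolution equation back in, using the constraint \eqref{eq:equiv to eqn 19 in 1.5} on the matter from Theorem \ref{thm:equiv to 2.6} and the explicit SUSY relations \eqref{eq:m susy condition a}--\eqref{eq:m-bar susy condition b}, to obtain an over-determined system on $(\lambda_1,\mu_1)$ on $S$ whose only solution is trivial, contradicting nowhere-vanishing of $\epsilon$ on $S$. The main obstacle is precisely this last step: explicit GHP-form manipulation of $(l-n)^a\hat\nabla_a\epsilon=0$ combined with the SUSY and matter constraints to rule out non-trivial $(\lambda_1,\mu_1)$ once $\lambda_0,\mu_0$ are known to vanish on a neighbourhood. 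The intermediate step of propagating $\rho\equiv 0$ off $S$ is also delicate and may alternatively be handled by a direct monotonicity/Raychaudhuri argument using that the supercovariantly constant $\epsilon$ on $\Sigma$ makes the null flagpole of $\lambda$ behave rigidly along the foliation.
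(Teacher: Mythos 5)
Your opening is the same as the paper's: assume $S$ supersymmetric, invoke Theorem \ref{thm:equiv to 2.6} to extend $\epsilon$ to a spinor with $h^b{}_a\hat\nabla_b\epsilon=0$ on $\Sigma$, and use Lemma \ref{lemma: equiv to 2.5} to get $V\equiv 0$ and $X^a\propto l^a$ on $S$. After that, however, your collar-expansion strategy has both local errors and a structural obstruction. Locally: ``not trapped'' in the sense of Lemma \ref{lemma:equiv to 2.4} means it is not the case that $\rho>0$ \emph{and} $\rho'>0$ hold everywhere on $S_s$; since $\rho'>0$ persists by continuity, this only gives $\rho\le 0$ \emph{somewhere} on $S_s$, not pointwise, so your sign assumption on $\rho|_{S_s}$ (and hence on $\partial_s\rho|_S$) is unjustified. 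Even granting it, the iteration stalls at order $s^2$: the contribution $\rho'\bigl(|\partial_s\lambda_0|^2+|\partial_s\mu_0|^2\bigr)$ is non-negative while $\tfrac12\partial_s^2\rho\,(|\lambda_1|^2+|\mu_1|^2)$ is non-positive, so their integral vanishing does not force each to vanish. Most importantly, the final step --- an overdetermined system on $(\lambda_1,\mu_1)$ on $S$ with only the trivial solution --- cannot exist: a horizon cross-section of an actual supersymmetric Kerr--Newman--adS black hole carries exactly this local data ($\rho\equiv 0$, $\rho'>0$, $\lambda_0=\mu_0=0$, nowhere-vanishing $\lambda_1,\mu_1$) and admits a genuine solution. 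The obstruction is therefore global, tied to the compactness of $\Sigma$, and your argument never uses compactness beyond the initial appeal to Theorem \ref{thm:equiv to 2.6}. The paper's own remark that the $\Lambda=0$ analogue is strictly weaker (it only forces the charges on $\Sigma$ to vanish) is a further signal that no collar-local analysis, which is insensitive to the sign of $\Lambda$ beyond the $K$-terms in the connection, can yield the stated conclusion.

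The paper instead proceeds globally: it constructs the Killing development $(\tilde M,\tilde g)$ of the data on $\Sigma$, in which $X^a$ extends to a Killing field and $\epsilon$ to a supercovariantly constant spinor; it then evaluates a Komar-type identity, $K\sqrt{2}\int_S\star\Psi=-2\int_\Sigma N^a\tilde R_{ab}X^b$, shows the left-hand side vanishes because $\Psi_{03}|_S=-if|_S=0$ by Lemma \ref{lemma: equiv to 2.5}, and uses the integrability condition for the supercovariantly constant spinor (equation \eqref{eq:result for theorem 2.8 introduction}) plus the constraint equations to reduce the right-hand side to $\int_\Sigma\bigl(T^{MW}_{ab}N^aX^b+\tfrac12\tilde T^{(m)}_{ab}N^aX^b-\tfrac{\Lambda}{8\pi}N\cdot X\bigr)=0$. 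The contradiction then comes precisely from the strictly positive $-\tfrac{\Lambda}{8\pi}N\cdot X$ term, i.e.\ from $\Lambda<0$ integrated over all of $\Sigma$. These ingredients --- the Killing development, the Komar integral, and the stress-tensor rigidity of Appendix \ref{appendix:form of stress tensor with null killing spinor} --- are absent from your proposal and are what actually close the argument.
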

This is stronger than the corresponding result for vanishing cosmological constant \cite{reall2025lawblackholemechanics}, which asserts that if $S$ is supersymmetric then the electromagnetic charges of the matter on $\Sigma$ must vanish. Theorem \ref{thm:equiv to 2.8} implies that a black hole admitting a supersymmetric horizon cross-section cannot have formed through gravitational collapse, a result we'll formulate more precisely as a corollary below,

We start with some motivation for the structure of the following proof. In the situation described, if $S$ is supersymmetric then Theorem \ref{thm:equiv to 2.6} implies that there exists a spinor satisfying $h^{b}{_a}\hat{\nabla}_b \epsilon = 0$ on $\Sigma$. This result holds for {\it any} compact 3-surface with boundary $S$. Since the domain of dependence $D(\Sigma)$ is foliated by such surfaces, this strongly suggests that $\epsilon$ extends to a supercovariantly constant spinor in $D(\Sigma)$, i.e., $\hat{\nabla}_b \epsilon = 0$. From such a spinor we can define the vector $X^a$ and scalar $V$ throughout $D(\Sigma)$, as well as a $2$-form $\Psi$, and these will satisfy certain differential relations \cite{Caldarelli_2003}, for example $X^a$ is a Killing vector field. 
The proof is then based on considering the Komar integral (over $S$) defined by this Killing field, converting to an integral over $\Sigma$ involving the matter on $\Sigma$ and then using results about the allowed form of the stress tensor of matter in supersymmetric spacetimes to reach a contradiction. In practice, the main difficulty in implementing these steps is to prove that $\epsilon$ defined on $\Sigma$ can be extended to a supercovariantly constant spinor in $D(\Sigma)$. Following \cite{reall2025lawblackholemechanics} we evade this difficulty by introducing an auxiliary spacetime, the ``Killing development'' of the hypersurface $\Sigma$ \cite{Beig_1996}. This is a spacetime defined by taking initial data on $\Sigma$ and extending off $\Sigma$ by demanding that $X^a$ extends as a Killing vector field. It then turns out that $\epsilon$ can be extended as a supercovariantly constant spinor in this spacetime \cite{Chru_ciel_2006}. The argument just described is then applied in this spacetime. 

\begin{proof}
    We assume the existence of such a $\Sigma$ with supersymmetric boundary $S$ and aim to show a contradiction. Let $(M,g)$ denote our spacetime. 
    
    We first use an argument from Appendix B of \cite{Chru_ciel_2006}, which generalises straightforwardly to include a cosmological constant. Since $S\subset M$ is supersymmetric with spinor $\epsilon$, by Theorem \ref{thm:equiv to 2.6} we know that $\epsilon$ extends to a spinor on $\Sigma$ such that $h^a{_b}\hat{\nabla}_a \epsilon = 0$. This implies that $X^a$ satisfies the tangential projection of Killing's equation onto $\Sigma$, so the components of $X^a$ on $\Sigma$ constitute ``Killing initial data''. One can then construct the ``Killing development'' $(\tilde{M}, \tilde{g})$ of the initial data on $\Sigma$. This is defined by the property that $X^a$ extends as a Killing vector field in $(\tilde{M}, \tilde{g})$. It contains a Cauchy surface diffeomorphic to $\Sigma$, which we also denote as $\Sigma$, on which the induced metric, extrinsic curvature and Maxwell field  are the same as those on $\Sigma$ in the original spacetime $(M,g)$ \cite{Beig_1996,Chru_ciel_2006}. The Maxwell field is defined throughout $(\tilde{M}, \tilde{g})$ by extending it off $\Sigma$ using Lie propagation along $X^a$. By a slight modification of the work in Appendix B of \cite{Chru_ciel_2006}, one can then show that $\epsilon$ extends to a supercovariantly constant spinor in $(\tilde{M}, \tilde{g})$.

    The existence of a supercovariantly constant spinor implies that the following equation holds in $(\tilde{M}, \tilde{g})$ \cite{Caldarelli_2003}:
    \begin{equation} \label{eq:scc!}
            \tilde{\nabla}_a X_b = \frac{K}{\sqrt{2}} \Psi_{ab} - f F_{ab} + g(\star F)_{ab}.
    \end{equation}
    where $\tilde\nabla$ is the Levi-Civita connection associated with $(\tilde{M}, \tilde{g})$, $f+ig \equiv V$ and $\Psi_{ab} \equiv i\bar{\epsilon}\gamma_{[a}\gamma_{b]}\epsilon$ is a two-form in $\tilde{M}$. We want to evaluate this equation on $S$ using the fact that Lemma \ref{lemma: equiv to 2.5} tells us that $f$ and $g$ vanish on $S$. However, as in \cite{reall2025lawblackholemechanics}, it is not clear that derivatives of $X^a$ transverse to $S$ are well-defined in $\tilde{M}$, as these directions do not lie in the Killing development of $\Sigma$.
    Instead, we consider a one-parameter family of spacelike 2-surfaces, $S_\delta\subset \text{int}(\Sigma)$ such that $\lim_{\delta\rightarrow 0} S_\delta = S$. By equation (\ref{eq:scc!}), $\lim_{\delta\rightarrow 0}dX|_{S_{\delta}}=K\sqrt{2}\Psi|_S$.
    The hypersurfaces bounded by such $S_\delta$ are $\Sigma_\delta \subset \Sigma$ with $\lim_{\delta\rightarrow 0}\Sigma_\delta= \Sigma$.
    Using the fact that $X^a$ is Killing in $\tilde{M}$, we obtain a Komar-like identity:
    \begin{equation} \label{eq:26 equivalent}
        \begin{split}
            K\sqrt{2}\int_S \star\Psi &= \lim_{\delta \rightarrow 0}\int_{S_{\delta}}\star dX = -\lim_{\delta \rightarrow 0}\int_{\Sigma_{\delta}} 2 N^a \tilde{R}_{ab}X^b = -\int_\Sigma 2N^a \tilde{R}_{ab} X^b,\\
            & = -16\pi \int_\Sigma N^a X^b \left[T_{ab}^{MW} + \left(\tilde{T}^{(m)}_{ab} - \frac{1}{2}\tilde{T}^{(m)}\tilde{g}_{ab}\right) - \frac{\Lambda}{8\pi} \tilde{g}_{ab} \right].
        \end{split}
    \end{equation}
where $N^a$ is the future-directed unit normal to $\Sigma$, $\tilde{R}_{ab}$ is the Ricci tensor of $\tilde{g}_{ab}$ and $\tilde{T}^{(m)}_{ab}$ is the matter energy-momentum tensor in $(\tilde{M},\tilde{g})$ defined by Einstein's equation with negative cosmological constant, equation \eqref{eq:efe}. The Maxwell field on $\Sigma$ is the same in both spacetimes so we don't need a tilde on $T_{ab}^{MW}$. Consider the left-hand side of the above equation in an orthonormal basis $\{e_0^a,e_i^a\}$ with $e_0^a$ and $e_3^a$ normal to $S$. This gives
    \begin{equation} \label{eq:integral of 2-form over S}
        \int_S \star\Psi = 
        \int_S \Psi_{03} d S,
    \end{equation}
    where $dS$ is the volume element on $S$.    
    The results of lemma \ref{lemma: equiv to 2.5} imply that, in Dirac spinor notation, $\gamma^0 \epsilon = \gamma^3 \epsilon$ on $S$. 
    Using this, $\Psi_{03}|_S = i\bar{\epsilon}\gamma_0\gamma_3\epsilon|_S = -i f|_S=0$. Hence (\ref{eq:integral of 2-form over S}) is zero and so \eqref{eq:26 equivalent} becomes
    \begin{equation}
    \label{eq:int_constr}
        \int_\Sigma N^a X^b \left[ T_{ab}^{MW} + \left(\tilde{T}^{(m)}_{ab} - \frac{1}{2}\tilde{T}^{(m)}\tilde{g}_{ab}\right) - \frac{\Lambda}{8\pi} \tilde{g}_{ab} \right] = 0.
    \end{equation}
    The next step is to understand the properties of $\tilde{T}^{(m)}_{ab}$.
    The form of $\tilde{T}^{(m)}_{ab}$ is highly constrained by the existence of a supercovariantly constant spinor. Generalising results of \cite{caldarelli2004supersymmetric},
    in equation \eqref{eq:result for theorem 2.8 introduction} of Appendix \ref{appendix:form of stress tensor with null killing spinor} we show the following result, true at any $p\in\tilde M$ (and so, in particular, along $\Sigma$):
    \begin{equation} \label{eq:appendix b stuff}
        (\tilde{T}^{(m)}_{ab}-\frac{1}{2}\tilde{g}_{ab}\tilde{T}^{(m)})X^b = \frac{1}{2}\tilde{T}^{(m)}_{ab}X^b.
    \end{equation}
    Using this, equation (\ref{eq:int_constr}) becomes
    \begin{equation} \label{eq:vanishing of things}
        \int_\Sigma \left( T_{ab}^{MW}N^a X^b  + \frac{1}{2}\tilde{T}^{(m)}_{ab}N^aX^b - \frac{\Lambda}{8\pi} N\cdot X\right)= 0.
    \end{equation}
    The sign of the second term in the integrand can be determed as follows. 
    On $\text{int}(\Sigma)$ the induced metric, extrinsic curvature, Maxwell field and cosmological constant are the same in both spacetimes $(M,g)$ and $(\tilde{M},\tilde{g})$. From the Hamiltonian constraint, we then have that, on $\Sigma$, $T_{ab}N^aN^b = \tilde{T}_{ab}N^aN^b$. Since the Maxwell field is the same on $\Sigma$ in both spacetimes, this reduces to $T^{(m)}_{ab}N^aN^b = \tilde{T}^{(m)}_{ab}N^a N^b$ on $\Sigma$.
    Similarly, the momentum constraint on $\Sigma$ gives $\tilde{T}^{(m)}_{ab}(e_i)^a N^b = T^{(m)}_{ab}(e_i)^a N^b$  where $\{(e_i)^a\}$ is an orthonormal basis tangential to $\Sigma$. Combining these results gives  $T^{(m)}_{ab}N^a = \tilde{T}^{(m)}_{ab}N^a$ on $\Sigma$ and hence $\tilde{T}^{(m)}_{ab}N^aX^b = T^{(m)}_{ab}N^aX^b$ on $\Sigma$.
    Since the matter stress tensor in $M$ satisfies the dominant energy condition (as it satisfies the local mass-charge inequality), the right-hand side of this is non-negative on $\Sigma$ and hence $\tilde{T}^{(m)}_{ab}N^aX^b\geq 0$.
   
    Returning to \eqref{eq:vanishing of things}, the first term is non-negative because a Maxwell field satisfies the dominant energy condition and $N^a$, $X^a$ are causal. The second term is non-negative by the above. The final term is strictly positive since $\Lambda<0$. Hence the LHS of \eqref{eq:vanishing of things} is strictly positive and we have reached a contradiction. 
\end{proof}

Note that, unlike in theorem 2.8 of \cite{reall2025lawblackholemechanics}, we cannot weaken the condition that $\rho\equiv 0$ on $S$ to $X^a$ being null on $S$. This is because the first term on the RHS of equation (\ref{eq:scc!}) does not necessarily vanish when $X^a$ is null, unlike the $K=0$ case.

We can now state the third law of black hole mechanics as the following corollary of Theorem \ref{thm:equiv to 2.8}. Consider a $2$-surface $S$ that is a cross-section of the event horizon of a black hole formed in gravitational collapse. Then we can write $S = \partial \Sigma$ for a compact spacelike surface $\Sigma$, corresponding to the black hole interior. The corollary asserts that such $S$ cannot ``look the same'' as a horizon cross-section of a supersymmetric black hole (e.g. a supersymmetric Kerr-Newman-adS black hole). Hence a supersymmetric black hole cannot form in finite time in gravitational collapse.

\begin{corollary}\label{theorem:third law in generality}
Consider a smooth spacetime $(M,g)$ satisfying the Einstein-Maxwell equations with a negative cosmological constant and charged matter satisfying the local mass-charge inequality \eqref{eq:local mass-charge inequality}. Let $\Sigma$ be a  smooth, compact, connected, spacelike 3-surface with $\partial \Sigma=S$ where $S$ is a smooth, compact, connected, $2$-surface. Let $(\tilde{M},\tilde{g})$ be a spacetime with a Maxwell field and admitting a non-trivial supercovariantly constant spinor. Assume that, within $\tilde{M}$, there exists a Killing horizon $\mathcal{H}^{+}$. Let $\tilde{S}$ be a cross-section of $\mathcal{H}^{+}$ on which the ``ingoing" future-directed null geodesics (i.e. those not tangent to $\mathcal{H}^{+}$) normal to $\tilde{S}$ are strictly converging. Then, there does not exist a diffeomorphism $\Phi$ such that:
    \begin{enumerate}
        \item $\Phi: U\rightarrow\tilde{U}$, where $U$ is a neighbourhood of $S$ in $M$ and $\tilde{U}$ is a neighbourhood of $\tilde{S}$ in $\tilde{M}$, such that $\Phi$ maps $\Sigma\cap U$ to the interior of $\mathcal{H}^{+}$ (i.e. those points in $\tilde{M}$ that can be reached by ingoing null geodesics) and that $\Phi(S)=\tilde{S}$,
        \item The pull-back map, $\Phi^\star$, maps the spacetime metric, extrinsic curvature and $U(1)$-gauge field on $\tilde{S}$ to the corresponding quantities on $S$.
    \end{enumerate}
\end{corollary}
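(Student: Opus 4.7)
The plan is to reduce the corollary to Theorem \ref{thm:equiv to 2.8} by using the diffeomorphism $\Phi$ to pull back the globally defined supercovariantly constant spinor on $\tilde{M}$ and thereby exhibit $S$ as a supersymmetric, marginally outer trapped surface with $\rho'>0$ in $(M,g)$. Assume for contradiction that such a $\Phi$ exists. Let $\tilde{\epsilon}$ denote the non-trivial supercovariantly constant spinor on $(\tilde{M},\tilde{g})$; by a standard argument $\tilde{\epsilon}$ is nowhere zero, and its tangential supercovariant constancy on any spacelike $2$-surface makes every such surface supersymmetric. In particular $\tilde{S}$ is supersymmetric with spinor $\tilde{\epsilon}|_{\tilde{S}}$.

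The first step is to transfer this supersymmetry to $S$. The tangential supercovariant constancy conditions (\ref{eq:m susy condition a})--(\ref{eq:m-bar susy condition b}) involve only the intrinsic derivatives on $S$ encoded in $\ethm, \bar{\ethm}$, the NP scalars $\rho, \sigma, \rho', \sigma'$ (built from the first and second fundamental forms of the surface in spacetime), the Maxwell components $\phi_{AB}$, the tangential components of the gauge potential (entering $\ethmm, \ethmmbar$), and the constant $K$. Property (2) of $\Phi$ ensures that all of these quantities on $S$ are $\Phi^{\star}$-pullbacks of the corresponding quantities on $\tilde{S}$, once compatible NP frames are chosen on the two surfaces. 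Hence the $\Phi^{\star}$-pullback $\epsilon$ of $\tilde{\epsilon}|_{\tilde{S}}$ is a nowhere-zero spinor on $S$ solving the tangential supercovariant constancy equations, so $S$ is supersymmetric.

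Next I would identify the expansions of the null normals. Since $\tilde{S}$ is a cross-section of the Killing horizon $\mathcal{H}^{+}$, its null generators have vanishing expansion on $\tilde{S}$, so in an NP frame on $\tilde{S}$ with $\tilde{l}$ tangent to $\mathcal{H}^{+}$ one has $\tilde{\rho}\equiv 0$ and, by hypothesis, $\tilde{\rho}'>0$. Property (1) of $\Phi$ requires $\Sigma\cap U$ to be mapped into the interior of $\mathcal{H}^{+}$, so the pushforward of the outward null normal $l$ of $S$ (pointing out of $\Sigma$) is the horizon-tangent null direction on $\tilde{S}$. The pullback therefore delivers $\rho\equiv 0$ and $\rho'>0$ on $S$ in the NP frame used throughout the paper. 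All hypotheses of Theorem \ref{thm:equiv to 2.8} are now satisfied---$\Sigma$ compact with boundary $S$, $S$ marginally outer trapped with $\rho'>0$, and $S$ supersymmetric---contradicting its conclusion.

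The main obstacle is not conceptual but a careful verification that the data matched by $\Phi^{\star}$ really determine every ingredient of the tangential supercovariant constancy equations, together with the orientation conventions distinguishing $l$ from $n$ that are forced by property (1). A mild subtlety is that the gauge potential $A$ is itself gauge-dependent whereas the physical supersymmetry condition is gauge-covariant under a rephasing of $\epsilon$; choosing compatible $U(1)$ gauges on $U$ and $\tilde{U}$ removes this, so that pulling back $\tilde{\epsilon}|_{\tilde{S}}$ produces a spinor satisfying the supersymmetry conditions on $S$ without any residual phase ambiguity.
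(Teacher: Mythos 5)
Your proposal is correct and follows essentially the same route as the paper's own proof: pull back the NP tetrad and all data on $\tilde{S}$ via $\Phi$ (using property (1) to fix which null normal is outgoing), deduce $\rho\equiv 0$, $\rho'>0$ and the supersymmetry of $S$, and contradict Theorem \ref{thm:equiv to 2.8}. The only cosmetic difference is that you transport the spinor $\tilde{\epsilon}|_{\tilde{S}}$ itself, whereas the paper argues that the tangential supersymmetry equations take the identical form on $S$ and hence inherit a non-trivial solution; your closing remarks about the spin coefficients and the gauge of $A$ correctly identify the same verification the paper carries out explicitly.
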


\begin{proof}

We proceed by assuming that a diffeomorphism with the listed properties does exist, and then show that this leads to a contradiction of Theorem \ref{thm:equiv to 2.8}.

    Since $\Phi^\star$ maps the spacetime metric on $\tilde{S}$ to that on $S$, given a NP tetrad on $\tilde{S}$, we can pull this back to a NP tetrad on $S$.
    By assumption, the “ingoing" null normal to $\tilde{S}$ is mapped to an “ingoing" null normal to $S$, call it $n^a$ (and vice versa for “outgoing", which, on $S$, we refer to as $l^a$).
    Assumption 2 says that the gauge field on $\tilde{S}$ is pulled-back to the gauge field on $S$, and so, since the pull-back commutes with the exterior derivative, the Maxwell field on $\tilde{S}$ is pulled back to that on $S$.
    Similarly, the expansion and shear of “ingoing" and “outgoing" null geodesics from $S$ are precisely the same quantities pulled-back from $\tilde{S}$ (the latter quantities are all directly related to the extrinsic curvatures along the ingoing/outgoing null normals).
    Because ingoing null geodesics normal to $\tilde{S}$ are strictly converging, on $S$ we have that the integral curves of $n^a$ are strictly converging - i.e. $\rho{'}>0$ on $S$.
    Because $\tilde{S}$ is a cross-section of a Killing horizon, the expansion of “outgoing" null geodesics from $S$ is precisely zero, i.e. $\rho\equiv0$ on $S$, so $S$ is marginally outer trapped.
    
    It now remains to establish the correspondence of the $\ethmm$ operators on each surface.
    By the assumption that the spacetime metric, gauge field and ingoing null normal are pulled back from $\tilde{S}$ to $S$, all components of the connection terms
    in $\ethmm$ on $S$ are completely determined by the pull-back along $\Phi$ from $\tilde{S}$.
    To see this explicitly, the only connection terms not specified by quantities intrinsic to $S$ are the spin-coefficients $\beta$ and $\beta{'}$, which involve tangential derivatives of the basis vectors normal to $S$.
    Since, by assumption, the pull-back $\Phi^\star$ maps the NP tetrad on $\tilde{S}$ to the tetrad on $S$ and the spacetime covariant derivative on $\tilde{S}$ to that on $S$ (which follows from the assumption that the spacetime metric on $\tilde{S}$ is mapped to that on $S$), the spin-coefficients $\beta$ and $\beta^{'}$ on $S$ are the pull-back of those quantities on $\tilde{S}$.
    That the tangential components of the gauge field in $\tilde{S}$ appearing in the connection terms of $\ethmm$ are pulled back to those in $S$ is guaranteed by virtue of the assumption that the pull back of the gauge field on $\tilde{S}$ is the gauge field on $S$.
    Therefore, we see that $\ethmm$ on $S$ is wholly specified via the pull-back of quantities on $\tilde{S}$ defining the same operator on that surface.
    Note that if we had assumed that only the induced metric of $\tilde{S}$ was pulled back to the induced metric on $S$, then such an identification would not have been possible as the connection terms $\ethmm$ on $S$ would not be fully specified by quantities pulled-back along $\Phi$ from $\tilde{M}$.
    
    Given this, the conditions of supersymmetry in equations (\ref{eq:m susy condition a}), (\ref{eq:m susy condition b}), (\ref{eq:m-bar susy condition a}), (\ref{eq:m-bar susy condition b}) take the same form on $S$ as on $\tilde{S}$. Hence the existence of a non-trivial solution of these equations on $\tilde{S}$ implies the existence of a non-trivial solution on $S$, i.e., $S$ is a supersymmetric surface. We have already seen that $S$ is marginally outer trapped. We now have a contradiction with Theorem \ref{thm:equiv to 2.8}.
\end{proof}

This result makes no reference to the asymptotic structure of the spacetime. In particular it does not require any choice of boundary conditions at infinity in adS. The result excludes the formation in finite time of any supersymmetric black hole whose event horizon is a Killing horizon. In particular, it applies to asymptotically (locally) adS black hole spacetimes such as supersymmetric Kerr-Newman-adS.
\section{Generalisation with inner boundary at a trapped surface}\label{sec:trapped surfaces}

As discussed in the Introduction, we wish to generalise the third law of \cite{reall2025lawblackholemechanics}, and our $\Lambda<0$ modification of this result, to cover black holes that are not necessarily formed in gravitational collapse. The idea is to enlarge the class of spacelike surfaces $\Sigma$ considered in \cite{reall2025lawblackholemechanics} and above to allow for an inner boundary at an outer trapped surface. The presence of such a surface can be regarded as the definition of what is meant by the initial black hole being ``non-extremal'' in this formulation of the third law \cite{israel1986third}. The idea is that $\Sigma$ extends from this inner boundary at a trapped surface inside the black hole to a late-time cross-section $S$ of the event horizon. We will show that such $S$ cannot ``look the same as'' a horizon cross-section of a supersymmetric black hole. Thus the initial non-extremal black hole cannot evolve to a supersymmetric black hole in finite time. 

We will show that the results of the previous section can be extended to allow for surfaces $\Sigma$ of this type. We will concurrently carry out a similar generalization for the $\Lambda=0$ case considered in Theorem 2.6 of \cite{reall2025lawblackholemechanics}. We begin with a generalisation of Theorem \ref{thm:equiv to 2.6} to allow for such a hypersurface. For brevity, we will only include relevant changes to the proof. The main idea is based on the work of \cite{ghhp}, which generalised the spinorial proof of the positive energy theorem to allow for an inner boundary at a marginally outer trapped surface. Recall that the notions of weakly outer trapped, marginally outer trapped etc were defined in section \ref{sec:NP}.

\begin{theorem} \label{thm:equiv to 2.6 but with inner boundary}
    Let $\Sigma$ be any smooth, compact, connected, spacelike surface in a spacetime satisfying the Einstein-Maxwell equations with cosmological constant $\Lambda \le 0$ and with matter that satisfies the local mass-charge inequality\footnote{\label{fn:flat} For $\Lambda=0$ we allow magnetically charged matter, in which case the local mass-charge inequality involves the magnetic current \cite{reall2025lawblackholemechanics}.} (\ref{eq:local mass-charge inequality}) on $\Sigma$.
    Assume that $\Sigma$ has boundary $\partial \Sigma = S\cup T$ where each connected component of $T$ is a weakly outer trapped surface. Assume that the ``ingoing'' (directed into $\Sigma$) future-directed null geodesics orthogonal to $S$ are strictly converging and let $\epsilon$ be a non-zero solution to (\ref{eq:holomorph a}) on $S$. Then, $\hat{I}_S [\epsilon] \geq 0$ with equality if, and only if,  (i) $S$ is supersymmetric with spinor $\epsilon$, (ii) $\epsilon$ extends to a spinor on $\Sigma$ satisfying $h^{b}{_a}\hat{\nabla}_b \epsilon = 0$ where $h_{ab}$ is the induced metric on $\Sigma$ (iii) the charged matter satisfies \eqref{eq:equiv to eqn 19 in 1.5} on $\Sigma$ where $N^a$ is a normal to $\Sigma$ and (iv) each connected component of $T$ is a marginally outer trapped surface.
\end{theorem}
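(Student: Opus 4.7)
The plan is to mirror the structure of the proof of Theorem~\ref{thm:equiv to 2.6}, with the added ingredient of boundary conditions at the inner boundary $T$ inspired by the treatment of marginally outer trapped surfaces in~\cite{ghhp}. As before, one extends $\epsilon$ to a spinor $\tilde{\epsilon}$ on $\Sigma$ solving the gauge-supercovariant Sen-Witten equation~(\ref{eq:gauge-supercovariant sw}), with the usual boundary conditions $\bar{\tilde{\lambda}}_{1'}=\bar{\lambda}_{1'}$, $\bar{\tilde{\mu}}_{1'}=\bar{\mu}_{1'}$ on $S$. Since $\partial\Sigma$ now has a second component $T$, one must also impose an appropriate condition on $\tilde{\epsilon}|_T$.

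The natural choice on $T$ is the ``dual'' of the holomorphic condition~(\ref{eq:holomorph a}): a tangential first-order condition in which $\rho$ appears in place of $\rho'$ and the $0'$ components of $\tilde{\epsilon}$ are eliminated in favour of the $1'$ components, rather than the other way round. This reflects the fact that on $T$ the roles of ``ingoing'' and ``outgoing'' are reversed relative to $S$: with the paper's convention $l^a$ points \emph{into} $\Sigma$ on $T$, and the outer-trapped condition involves $\rho$. Repeating the integration-by-parts argument of Lemma~\ref{lemma:equiv to 2.3} using this dual condition on $T$ yields a contribution to the boundary integral over $T$ of the schematic form $\int_T \rho\, F[\tilde{\epsilon}]$ with $F[\tilde{\epsilon}]\ge 0$ pointwise, which is non-negative on a weakly outer trapped surface.

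Applying Stokes' theorem to $\int_\Sigma d\hat{\Lambda}(\tilde{\epsilon})$ and accounting for the orientation of $T$ in $\partial\Sigma$ (on which, by the convention of Section~\ref{sec:NP}, $l^a$ points into $\Sigma$) then produces
\begin{equation*}
\hat{I}_S[\tilde{\epsilon}] \;=\; \int_\Sigma d\hat{\Lambda}(\tilde{\epsilon}) \;+\; \int_T \rho\, F[\tilde{\epsilon}],
\end{equation*}
in which the bulk integral is non-negative by the local mass-charge inequality and the Sen-Witten equation, exactly as in Theorem~\ref{thm:equiv to 2.6}. Combined with the $S$-boundary identity~(\ref{eq: equiv to 21 in 1.5}) this gives $\hat{I}_S[\epsilon]\ge 0$. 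For the equality case, each of the three non-negative contributions must vanish separately. Vanishing of the $S$-piece forces $\tilde{\epsilon}=\epsilon$ on $S$ and thus gives (i); vanishing of the bulk integral yields (ii) and (iii) by the argument already used in Theorem~\ref{thm:equiv to 2.6}; and vanishing of the $T$-piece, combined with the fact (following from (ii)) that $\tilde{\epsilon}$ is supercovariantly constant along $\Sigma$ and hence nowhere zero on $T$, forces $\rho\equiv 0$ on each connected component of $T$, which is condition (iv). The converse follows by running the identities backwards.

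The principal obstacle lies not in these identities but in the existence of $\tilde{\epsilon}$: one must solve the gauge-supercovariant Sen-Witten equation on the compact manifold-with-boundary $\Sigma$ subject to \emph{distinct} elliptic boundary conditions at $S$ and at $T$. This is a genuine generalisation of Appendix~\ref{sec:existence} and of Appendix~A of~\cite{reall2025lawblackholemechanics}, each of which treats only a single boundary component. The additional input needed is a verification that the proposed dual condition on $T$, together with the original condition on $S$, defines a well-posed elliptic boundary value problem for the Sen-Witten operator; once this is in place the Fredholm-alternative argument of~\cite{reall2025lawblackholemechanics} extends directly, with the sign $\rho\ge 0$ on $T$ ensuring that the energy estimate picks up no sign obstruction from the inner boundary.
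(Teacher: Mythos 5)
Your overall architecture is the right one and matches the paper's: extend $\epsilon$ to a Sen--Witten spinor $\tilde\epsilon$ on $\Sigma$, apply Stokes' theorem to pick up an extra boundary term $-\hat I_T[\tilde\epsilon]$, show that term is non-negative using $\rho\ge 0$ on $T$, and in the equality case use supercovariant constancy of $\tilde\epsilon$ (nowhere zero since $\tilde\epsilon=\epsilon\ne 0$ on $S$) to force $\rho\equiv 0$ on $T$. However, there is a genuine gap in your choice of boundary condition at $T$. You propose imposing a ``dual'' of \eqref{eq:holomorph a}, i.e.\ a \emph{tangential first-order differential} condition on $\tilde\epsilon|_T$ with $\rho$ in place of $\rho'$. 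Running the integration-by-parts computation of Lemma \ref{lemma:equiv to 2.3} with such a condition does not produce a boundary term of the form $\int_T\rho\,F[\tilde\epsilon]$: by the priming symmetry of that computation it produces $\int_T\bigl[\rho\,(|\tilde\lambda_1|^2+|\tilde\mu_1|^2)+\rho'\,(|\tilde\lambda_0|^2+|\tilde\mu_0|^2)\bigr]$, and the hypothesis that $T$ is weakly outer trapped controls only the sign of $\rho$, not of $\rho'$ on $T$. The uncontrolled $\rho'$ piece destroys both the positivity argument and the equality-case argument for (iv). The only way to remove it is to kill the $0'$ components outright, and that is what the paper (following \cite{ghhp}) does: it imposes the \emph{algebraic} condition $\bar{\tilde\lambda}_{0'}=\bar{\tilde\mu}_{0'}=0$ on $T$ (equivalently $(\gamma^0-\gamma^3)\tilde\epsilon=0$), which annihilates the tangential-derivative terms and the $0$-component prefactors in $\hat I_T$ and yields exactly $-\hat I_T[\tilde\epsilon]=\int_T\rho\,(|\tilde\lambda_1|^2+|\tilde\mu_1|^2)\ge 0$.

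This choice also matters for the existence theory you flag as the principal obstacle. With the algebraic condition, the problem on $\Sigma$ is a Dirac-type operator with local chirality-type boundary conditions $(\gamma^0+\gamma^3)\tilde\epsilon=0$ on $S$ and $(\gamma^0-\gamma^3)\tilde\epsilon=0$ on $T$; these are self-adjoint in the relevant sense and the Fredholm argument of \cite{reall2025lawblackholemechanics,chrusciel2003boundaryvalueproblemsdiractype} extends with little change (this is the content of Appendix \ref{sec:existence}). A first-order differential boundary condition on $T$ would pose a substantially harder boundary value problem, and your assertion that the Fredholm-alternative argument ``extends directly'' would not be justified. A minor further point: vanishing of the $S$-boundary remainder in \eqref{eq: equiv to 21 in 1.5} gives only $\tilde\epsilon=\epsilon$ on $S$; supersymmetry of $S$ (your (i)) follows from this together with the vanishing of the bulk gradient term, i.e.\ from (ii), not from the $S$-piece alone.
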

\begin{proof}
We'll give the proof for $\Lambda<0$. The proof for $\Lambda=0$ is obtained by setting $K=0$ in the the supercovariant derivative, which alters the form of equation (\ref{eq:holomorph a}) but this has little impact on the results.

We choose our NP tetrad so that $l^a$ is tangent to the outgoing future-directed null geodesics normal to $T$ and $S$. The weakly outer trapped condition is then $\rho \geq 0$ on $T$. The condition that the ingoing null geodesics from $S$ are strictly converging is $\rho'>0$ on $S$. 

We follow the steps of the proof of Theorem \ref{thm:equiv to 2.6}, introducing another spinor field $\tilde{\epsilon}$ on $\Sigma$ satisfying the boundary conditions \eqref{eq: boundary conditions on S in 1.5} on $S$ and using equation \eqref{eq:holomorph a} to deduce equation \eqref{eq: equiv to 21 in 1.5}, which we repeat here:
\begin{equation} \label{eq: equiv to 21}
        \hat{I}_S[\epsilon] = \hat{I}_S[\tilde{\epsilon}] + \int_S \rho{'}\Big(|\tilde{\lambda}_0 - \lambda_0|^2 + |\tilde{\mu}_0 - \mu_0|^2 \Big).
    \end{equation}
Since $\rho{'}>0$ on $S$, the non-negativity of $\hat{I}_{S}[\epsilon]$ follows if we can establish non-negativity of $\hat{I}_{S}[\tilde{\epsilon}]$. To do so, we use Stokes' theorem to obtain
    \begin{equation}\label{eq:stokes with apparent horizon}
        \hat{I}_{S}[\tilde{\epsilon}] = \int_\Sigma d\hat{\Lambda}(\tilde{\epsilon}) - \hat{I}_T[\tilde\epsilon],
    \end{equation}
where $\hat{I}_T$ is defined by replacing $S$ with $T$ in \eqref{ISdef}. An explicit form for $\hat{I}_T[\epsilon]$ can be deduced from the expression \eqref{eq:nw functional} for $\hat{I}_S[\epsilon]$ as follows. The null geodesics directed {\it into} $\Sigma$ have tangent $l^a$ on $T$ and $n^a$ on $S$. Hence the form of the boundary term arising on $T$ can be deduced from the boundary term on $S$ by swapping $l^a$ and $n^a$ and (to give a correctly oriented basis) also swapping $m^a$ and $\bar{m}^a$. This is the same as the GHP ``priming'' map which transforms the spin-basis vectors as $o^A\rightarrow i\iota^A$ and $\iota^A\rightarrow io^A$. Applying this operation to \eqref{eq:nw functional} gives 
\begin{equation}
    \begin{split}
    -\hat{I}_{T}[\tilde\epsilon] = \text{Re} \int_T \left\{ \lambda_1 \Big[\bar{\ethmmbar} \bar{\lambda}_{0^{'}} + \rho \bar{\lambda}_{1^{'}} + (K + \sqrt{2}\bar{\phi}_{0^{'}1^{'}})\mu_0\Big] - \lambda_0 \Big[\bar{\ethmm}\bar{\lambda}_{1^{'}} + \rho{'} \bar{\lambda}_{0^{'}} + (-K + \sqrt{2}\bar{\phi}_{0^{'}1^{'}})\mu_1\Big] \right. + \\ \left.
    \mu_1 \Big[\ethmm \bar{\mu}_{0^{'}} + \rho \bar{\mu}_{1^{'}} + (K - \sqrt{2}\bar{\phi}_{0^{'}1^{'}})\lambda_0 \Big] - \mu_0 \Big[\ethmmbar\bar{\mu}_{1^{'}} + \rho{'} \bar{\mu}_{0^{'}} + (-K - \sqrt{2}\bar{\phi}_{0^{'}1^{'}})\lambda_1 \Big] \right\},
\end{split}
\end{equation}
where we have dropped tildes on Weyl spinor components for readability.
Next, motivated by \cite{ghhp}, we assume that the spinor $\tilde{\epsilon}$ satisfies the following boundary condition on $T$:
 \begin{equation} \label{eq:boundary conditions on apparent horizon}
            \bar{\tilde{\lambda}}_{0^{'}} = \bar{\tilde{\mu}}_{0^{'}} = 0 \text{ on } T
\end{equation}
which gives
 \begin{equation}\label{eq:nw functional on H}
        -\hat{I}_{T}[\tilde{\epsilon}] =  \int_T\rho (|\tilde{\lambda}_1|^2 + |\tilde{\mu}_1|^2)
\end{equation}
which is manifestly non-negative since $\rho \ge 0$ on $T$.

Similarly to Theorem \ref{thm:equiv to 2.6}, the first term on the RHS of equation (\ref{eq:stokes with apparent horizon}) can be shown to be non-negative if $\tilde{\epsilon}$ satisfies the supercovariant modification of the Sen-Witten equation\footnote{\label{fn:non-negativity of nw functional on sw spinors lambda=0} The results of Appendix \ref{appendix:supercovariant sparling-like identity} don't directly apply to the $\Lambda=0$ case as in that Appendix we set the magnetic current to zero - required for consistency in the $\Lambda < 0$ case. However, using the results of \cite{gibbons1982bogomolny}, the above argument runs through identically allowing for magnetic currents.} $\gamma^b h^a{_b}\hat{\nabla}_a \tilde{\epsilon} = 0$ on $\Sigma$
with boundary conditions \eqref{eq: boundary conditions on S in 1.5} and (\ref{eq:boundary conditions on apparent horizon}). 
Justification for the existence of such a solution is given in Appendix \ref{sec:existence}.
This, therefore, proves the statement $\hat{I}_{S}[\epsilon] \geq 0$.

Now assume that $\hat{I}_{S}[\epsilon] = 0$. From the results above, both terms on the RHS of (\ref{eq:stokes with apparent horizon}) are non-negative and so they must both individually be zero. The first term in equation (\ref{eq:stokes with apparent horizon}) can be shown, in a manner identical to that of Theorem \ref{thm:equiv to 2.6}, to imply conditions (i), (ii) and (iii) hold. If $\rho > 0$ at some point of $T$ then the vanishing of the second term on the RHS of (\ref{eq:stokes with apparent horizon}) implies that $\tilde{\lambda}_1 = \tilde{\mu}_1 = 0$ at this point, and hence $\tilde{\epsilon}=0$ at this point. But, by (ii), $\tilde{\epsilon}$ is supercovariantly constant on $\Sigma$, so this implies (see e.g. \cite{reall2025lawblackholemechanics}) that $\tilde{\epsilon}$ vanishes everywhere on $\Sigma$. However $\tilde{\epsilon}=\epsilon$ on $S$ (by (ii)), which contradicts the fact that $\epsilon$ is non-zero on $S$. So we must actually have $\rho \equiv 0$ on $T$, which proves (iv). Conversely, if we assume (i) to (iv) then reversing these steps gives $\hat{I}_S[\epsilon]=0$.
\end{proof}

An example of a surface satisfying (i) to (iv) is a surface $\Sigma$ in a supersymmetric adS black hole spacetime where the inner boundary $T$ is a cross-section of the event horizon and the outer boundary $S$ lies strictly outside the black hole and is convex in the sense that ingoing null geodesics normal to $S$ are converging. 

The generalisation of corollary \ref{corollary:equiv to 2.6.1} is 
\begin{corollary}
\label{cor:equiv to 2.6.1 but with inner boundary}
    If $\Sigma$, $S$ and $T$ satisfy the assumptions of Theorem \ref{thm:equiv to 2.6 but with inner boundary} and $S$ is supersymmetric, then $T$ is marginally outer trapped and supersymmetric, and every smooth 2-surface contained in $\Sigma$ is also supersymmetric. 
\end{corollary}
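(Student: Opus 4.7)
The plan is to proceed in close parallel with the proof of Corollary \ref{corollary:equiv to 2.6.1}, leveraging Theorem \ref{thm:equiv to 2.6 but with inner boundary} to do essentially all of the work. First I would note that supersymmetry of $S$ means there exists a non-trivial spinor $\epsilon$ on $S$ annihilated by all tangential components of $\hat\nabla$, i.e., satisfying both $m^a\hat\nabla_a\epsilon=0$ and $\bar m^a\hat\nabla_a\epsilon=0$. Direct inspection of the supersymmetry conditions (\ref{eq:m susy condition a})--(\ref{eq:m-bar susy condition b}) shows that equation (\ref{eq:holomorph a}) is a strict subset of these, so $\epsilon$ automatically solves (\ref{eq:holomorph a}) on $S$. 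By Lemma \ref{lemma:zero on susy surface}, $\hat I_S[\epsilon]=0$.

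Next I would apply Theorem \ref{thm:equiv to 2.6 but with inner boundary}: the hypothesis that the ingoing null geodesics normal to $S$ are strictly converging is inherited from the corollary's assumptions on $\Sigma$, $S$ and $T$, and the weak outer trapping of $T$ is as well. The vanishing of $\hat I_S[\epsilon]$ thus forces the equality clauses (i)--(iv) of the theorem. Clause (iv) immediately gives that every connected component of $T$ is marginally outer trapped, which is one of the three assertions of the corollary. Clause (ii) gives that $\epsilon$ extends to a spinor on $\Sigma$, still denoted $\epsilon$, satisfying $h^b{}_a\hat\nabla_b\epsilon=0$, i.e., $\epsilon$ is supercovariantly constant along $\Sigma$.

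Now I would invoke the standard argument (as in \cite{reall2025lawblackholemechanics} and used in the proof of Corollary \ref{corollary:equiv to 2.6.1}) that a supercovariantly constant spinor on a connected manifold that vanishes at a single point must vanish identically: since $\epsilon$ is non-zero on $S\subset\partial\Sigma$, it is nowhere zero on the closure of $\Sigma$. For any smooth spacelike 2-surface $S'\subset\Sigma$ and any vector field $t^a$ tangent to $S'$, $t^a\hat\nabla_a\epsilon=0$ because $t^a$ is tangent to $\Sigma$. Hence $S'$ is supersymmetric with spinor $\epsilon|_{S'}$. Applying this in particular to each connected component of $T$ (viewed as a 2-surface contained in $\bar\Sigma$) gives that $T$ is supersymmetric, completing the third assertion.

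I do not expect any genuine obstacle here: the proof is essentially a packaged reading of Theorem \ref{thm:equiv to 2.6 but with inner boundary} together with the ``vanishing-then-identically-vanishing'' rigidity of supercovariantly constant spinors. The only point requiring a moment's care is verifying that the tangential supersymmetry conditions on $S$ imply (\ref{eq:holomorph a}), so that the theorem is applicable with the same $\epsilon$, and that the non-vanishing of $\epsilon$ propagates from $S$ to all of $\Sigma$ (including its inner boundary $T$) via the supercovariantly constant condition.
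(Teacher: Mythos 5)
Your proposal is correct and follows essentially the same route as the paper: vanishing of $\hat I_S[\epsilon]$ via Lemma \ref{lemma:zero on susy surface}, then clauses (ii) and (iv) of Theorem \ref{thm:equiv to 2.6 but with inner boundary}, then the nowhere-vanishing supercovariantly constant extension to conclude that every 2-surface in $\Sigma$ (in particular $T$) is supersymmetric. The only cosmetic difference is that you spell out why the tangential supersymmetry conditions imply \eqref{eq:holomorph a} (the first equation there is the complex conjugate of one of them), which the paper takes for granted.
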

\begin{proof}
 Since $S$ is supersymmetric, $\hat{I}_S[\epsilon]=0$ and $T$ is marginally outer trapped by point (iv) of Theorem \ref{thm:equiv to 2.6 but with inner boundary}. The proof of the final statement is the same as in corollary \ref{corollary:equiv to 2.6.1}. Applying this statement to $T$ shows that $T$ is supersymmetric. 
\end{proof}

We can now give the generalisation of our main theorem, and the corresponding result of \cite{reall2025lawblackholemechanics}, to include an inner boundary at a weakly outer trapped surface. 

\begin{theorem} \label{thm:equiv to 2.8 but with inner boundary}
Consider a smooth spacetime satisfying the Einstein-Maxwell equations with a cosmological constant $\Lambda \le 0$ and charged matter satisfying the local mass-charge inequality (\ref{eq:local mass-charge inequality}).
Let $\Sigma$ be a smooth, compact, connected, spacelike surface with $\partial \Sigma=S \cup T$ where $S$, the ``outer'' boundary of $\Sigma$, is a smooth, compact, connected, marginally outer trapped surface and where $T$, the ``inner'' boundary of $\Sigma$, is a finite union of weakly outer trapped 2-surfaces. Assume that the ``ingoing'' future-directed null geodesics normal to $S$ are strictly converging. (i) If $\Lambda < 0$ then $S$ cannot be supersymmetric; (ii) if $\Lambda =0$ and $S$ is supersymmetric then the electric and magnetic charge densities of matter on $\Sigma$ must vanish and $T$ must be marginally outer trapped and supersymmetric.
\end{theorem}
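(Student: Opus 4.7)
The plan is to run the proof of Theorem \ref{thm:equiv to 2.8} on the hypersurface $\Sigma$, accounting for the inner boundary $T$ at each step. The starting point is Corollary \ref{cor:equiv to 2.6.1 but with inner boundary}: if $S$ is supersymmetric then we are in the equality case of Theorem \ref{thm:equiv to 2.6 but with inner boundary}, so $T$ is marginally outer trapped and supersymmetric, and the spinor $\epsilon$ extends to all of $\Sigma$ with $h^b{}_a \hat\nabla_b \epsilon = 0$. Inspecting the proof of that theorem, the extension may be taken equal to the spinor $\tilde\epsilon$ solving the supercovariant Sen--Witten equation, which by construction satisfies $\bar\lambda_{0'} = \bar\mu_{0'} = 0$ on $T$. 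With the tetrad convention of Section \ref{sec:NP} (so that $l^a$ points into $\Sigma$ on $T$), these boundary conditions force $V \equiv 0$ and $X^a \propto l^a$ on $T$, exactly the conclusion Lemma \ref{lemma: equiv to 2.5} gives on $S$, but here obtained without any hypothesis on the sign of $\rho'$ on $T$.

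Next, I would construct the Killing development $(\tilde{M}, \tilde{g})$ of $\Sigma$ exactly as in the proof of Theorem \ref{thm:equiv to 2.8}; the extended $\epsilon$ is supercovariantly constant there and identity \eqref{eq:scc!} holds throughout. Applying Stokes' theorem to $d(\star dX)$ on a family $\Sigma_\delta \subset \Sigma$ with $\partial \Sigma_\delta = S_\delta \cup T_\delta$, and taking $\delta \to 0$ to avoid issues with derivatives transverse to $\partial\Sigma$, one obtains a Komar-like identity with boundary contributions on both $S$ and $T$. On either boundary, $f = g = 0$, so $dX = (K/\sqrt{2})\,\Psi$ there; and the Dirac-spinor identity $\gamma^0 \epsilon = \gamma^3 \epsilon$ (valid on $S$ by Lemma \ref{lemma: equiv to 2.5} and on $T$ by the analogous reasoning with $X \propto l$) gives $\Psi_{03} = 0$, so both boundary terms vanish. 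The upshot is that the Komar identity reduces precisely to \eqref{eq:vanishing of things}.

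For case (i), $\Lambda < 0$, the argument of Theorem \ref{thm:equiv to 2.8} now applies verbatim: the Maxwell contribution is non-negative by the Maxwell dominant energy condition, the matter term is non-negative after invoking the supersymmetry constraint \eqref{eq:appendix b stuff} on $\tilde T^{(m)}$ together with the Hamiltonian/momentum-constraint identification $T^{(m)}_{ab}N^a = \tilde T^{(m)}_{ab}N^a$, and the cosmological-constant term is strictly positive, yielding a contradiction. For case (ii), $\Lambda = 0$ and hence $K = 0$, the $\Psi$-term drops out of \eqref{eq:scc!} so $dX = 0$ on $S \cup T$ and the boundary integrals vanish trivially; the remaining identity $\int_\Sigma N^a X^b [T^{MW}_{ab} + \tfrac{1}{2} \tilde T^{(m)}_{ab}]=0$ forces each integrand to vanish pointwise. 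Combining this with \eqref{eq:equiv to eqn 19 in 1.5} and the local mass-charge inequality \eqref{eq:local mass-charge inequality}, exactly as in Theorem 2.8 of \cite{reall2025lawblackholemechanics}, pins down the vanishing of the electric and (allowing magnetic currents via footnote \ref{fn:flat}) magnetic charge densities of matter on $\Sigma$; the marginally outer trapped and supersymmetric status of $T$ is already built into Corollary \ref{cor:equiv to 2.6.1 but with inner boundary}.

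The main obstacle is the inner-boundary contribution to the Komar integral: Lemma \ref{lemma: equiv to 2.5} cannot be invoked directly on $T$ because we do not assume $\rho'|_T > 0$. The resolution is to exploit the specific boundary data $\bar\lambda_{0'} = \bar\mu_{0'} = 0$ satisfied by the spinor $\tilde\epsilon$ on $T$ in the proof of Theorem \ref{thm:equiv to 2.6 but with inner boundary}, which directly forces $V \equiv 0$ and $X \propto l$ on $T$ and hence kills the inner-boundary term; everything downstream then proceeds in parallel with the analysis on $S$.
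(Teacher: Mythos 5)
Your proposal is correct and follows essentially the same route as the paper's proof: invoke Theorem \ref{thm:equiv to 2.6 but with inner boundary} and Corollary \ref{cor:equiv to 2.6.1 but with inner boundary} to extend $\epsilon$ and establish that $T$ is marginally outer trapped and supersymmetric, use the Sen--Witten boundary data $\bar{\lambda}_{0'}=\bar{\mu}_{0'}=0$ on $T$ (rather than Lemma \ref{lemma: equiv to 2.5}, which is unavailable there) to kill the inner-boundary Komar term, and then run the Killing-development argument of Theorem \ref{thm:equiv to 2.8} separately for $\Lambda<0$ and $\Lambda=0$. The key difficulty you single out --- handling the $T$ contribution to $\int_{S\cup T}\star\Psi$ without a sign assumption on $\rho'|_T$ --- is exactly the point the paper addresses, and in the same way.
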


\begin{proof}
   The first part of the proof covers both $\Lambda<0$ and $\Lambda=0$. We will split these two cases at the end. 

   We start by assuming that $S$ is supersymmetric with spinor $\epsilon$. By Theorem \ref{thm:equiv to 2.6 but with inner boundary} (or the analogous result in \cite{reall2025lawblackholemechanics}) we know that $\epsilon$ extends to a spinor on $\Sigma$ such that $h^a{_b}\hat{\nabla}_a \epsilon = 0$ and, by Corollary \ref{cor:equiv to 2.6.1 but with inner boundary} that $T$ must be marginally  outer trapped and supersymmetric. Equation \eqref{eq:boundary conditions on apparent horizon} in the proof of Theorem \ref{thm:equiv to 2.6 but with inner boundary} shows that $\epsilon$ satisfies $\lambda_0=\mu_0=0$ on $T$ (dropping the tilde since that proof shows that $\tilde{\epsilon}$ is an extension of $\epsilon$). This implies that $V \equiv 0$ on $T$. Lemma \ref{lemma: equiv to 2.5} tells us that $V \equiv 0$ on $S$.
   
    As in the proof of Theorem \ref{thm:equiv to 2.8}, we consider the Killing development of $\Sigma$ along $X^a$ and obtain a spacetime $(\tilde{M},\tilde{g})$ wherein $X^a$ is Killing. Since $V=f+ig$ vanishes on $S$ and on $T$, we can proceed as in the proof of Theorem \ref{thm:equiv to 2.8} to obtain
    \be \label{eq:int_constr_T}
K\sqrt{2}\int_{S \cup T} \star\Psi  = -16\pi \int_\Sigma N^a X^b \left[T_{ab}^{MW} + \left(\tilde{T}^{(m)}_{ab} - \frac{1}{2}\tilde{T}^{(m)}\tilde{g}_{ab}\right) - \frac{\Lambda}{8\pi} \tilde{g}_{ab} \right].
    \ee
For $\Lambda=0$ the LHS is zero (as $K=0$). For $\Lambda<0$, the integral of $\star \Psi$ over $S$ vanishes as explained in the proof of Theorem \ref{thm:equiv to 2.8}. Its integral over $T$ vanishes by the same argument (the condition $\lambda_0=\mu_0=0$ is the same as the condition that the spinor satisfies on $S$ following from Lemma \ref{lemma: equiv to 2.5}). We have now shown that the LHS above is zero.

    In the $\Lambda<0$ case, the remainder of the proof now follows identically to that of Theorem \ref{thm:equiv to 2.8}, resulting in the following equation
    \begin{equation}
        \int_\Sigma \left( T_{ab}^{MW}N^a X^b  + \frac{1}{2}\tilde{T}^{(m)}_{ab}N^aX^b - \frac{\Lambda}{8\pi} N\cdot X\right)  = 0,
    \end{equation}
    where we have invoked the result of equation \eqref{eq:result for theorem 2.8 introduction} of Appendix \ref{appendix:form of stress tensor with null killing spinor}.
    As before, using that the constraint equations on $\Sigma$ are the same in both spacetimes, one can show that $\tilde{T}^{(m)}_{ab}N^aX^b$ is non-negative. 
    Since $\Lambda<0$, the final term on the LHS of the above expression is strictly positive and the other terms are non-negative. Thus, we have a contradiction and, so, (i) is proved.

    If $\Lambda=0$ then similarly to \cite{reall2025lawblackholemechanics} (using results of \cite{tod1983all}), at any point $p\in\tilde{M}$ we have\footnote{$\tilde{J}^a$ denotes the magnetic current, which we include for $\Lambda=0$ but not for $\Lambda<0$.}
    \begin{equation}
       \tilde{T}_{ab} = \chi X_a X_b, \qquad \tilde{J}^a = \chi\text{Re}(V)X^a, \qquad\tilde{\tilde{J}}^a = - \chi\text{Im}(V)X^a.
    \end{equation}
    where the constraint equations  imply $\chi \ge 0$ on $\Sigma$ \cite{reall2025lawblackholemechanics}. Using this in \eqref{eq:int_constr_T}, we get
    \begin{equation}
        \int_\Sigma \left( T_{ab}^{MW}N^a X^b  + \frac{1}{2} \chi X^2 N\cdot X\right) = 0,
    \end{equation}
    But this equation implies $\chi X^2 = \chi V  = 0$ and so both the electric and magnetic current densities on $\Sigma$ vanish in $(\tilde{M},\tilde{g})$. Since the Maxwell fields and the initial data on $\Sigma$ are the same in both spacetimes, the Maxwell constraint equations then imply that the electric and magnetic charge densities must vanish on $\Sigma$ in the physical spacetime. This concludes the proof of (ii).
\end{proof}

The third law is now a corollary to the above theorem, generalising Corollary \ref{theorem:third law in generality} to allow $\Sigma$ to possess an inner boundary at an outer trapped surface. This corollary considers a situation in which $\Sigma$ is a spacelike 3-surface with an inner boundary at an outer trapped surface and an outer boundary at $S$. It shows that if the local mass-charge inequality is satisfied then $S$ cannot ``look the same as'' a horizon cross-section of a supersymmetric black hole. This captures the idea that a non-extremal black hole cannot become supersymmetric in finite time.

\begin{corollary}
    Consider a smooth spacetime $(M,g)$ satisfying the Einstein-Maxwell equations with a cosmological constant $\Lambda \le 0$ and charged matter satisfying the local mass-charge inequality. Let $\Sigma$ be a  smooth, compact, connected, spacelike surface with $\partial \Sigma=S \cup T$ where $S$, the ``outer'' boundary of $\Sigma$, is a smooth, compact, connected, surface and where $T$, the ``inner'' boundary of $\Sigma$, is a finite union of outer trapped 2-surfaces. Let $(\tilde{M},\tilde{g})$ be a spacetime with a Maxwell field and admitting a non-trivial supercovariantly constant spinor. Assume that, within $\tilde{M}$, there exists a Killing horizon $\mathcal{H}^{+}$. Let $\tilde{S}$ be a cross-section of $\mathcal{H}^{+}$ on which the ``ingoing" future-directed null geodesics (i.e. those not tangent to $\mathcal{H}^{+}$) normal to $\tilde{S}$ are strictly converging.
Then, there does not exist a diffeomorphism $\Phi$ such that:
    \begin{enumerate}
        \item $\Phi: U\rightarrow\tilde{U}$, where $U$ is a neighbourhood of $S$ in $M$ and $\tilde{U}$ is a neighbourhood of $\tilde{S}$ in $\tilde{M}$, such that $\Phi$ maps $\Sigma\cap U$ to the interior of $\mathcal{H}^{+}$ (i.e. those points in $\tilde{M}$ that can be reached by ingoing null geodesics) and that $\Phi(S)=\tilde{S}$,
        \item The pull-back map, $\Phi^\star$, maps the spacetime metric, extrinsic curvature and $U(1)$-gauge field on $\tilde{S}$ to the corresponding quantities on $S$.
    \end{enumerate}
\label{cor:3rd law with inner boundary}
    \end{corollary}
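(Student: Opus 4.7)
The plan is to reduce the corollary to Theorem \ref{thm:equiv to 2.8 but with inner boundary} by the same diffeomorphism-transfer argument used to deduce Corollary \ref{theorem:third law in generality} from Theorem \ref{thm:equiv to 2.8}. I would assume for contradiction that a diffeomorphism $\Phi$ with the stated properties exists. Since $T$ is by hypothesis a finite union of outer trapped surfaces, it is in particular a finite union of weakly outer trapped surfaces, so the hypotheses of Theorem \ref{thm:equiv to 2.8 but with inner boundary} on the inner boundary are automatically satisfied, and nothing new has to be done on the $T$-side beyond recording that $\rho>0$ strictly there.

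First, I would transfer the Newman--Penrose structure from $\tilde S$ to $S$ along $\Phi$, exactly as in Corollary \ref{theorem:third law in generality}. The fact that $\Phi^{\star}$ maps the spacetime metric, extrinsic curvature and $U(1)$-gauge field on $\tilde S$ to those on $S$ implies that a NP tetrad adapted to $\tilde S$ pulls back to a NP tetrad adapted to $S$, with the ingoing null normal on $\tilde S$ corresponding to the ingoing null normal $n^a$ on $S$ (this uses assumption 1 that $\Sigma\cap U$ is mapped to the interior of $\mathcal{H}^{+}$). Strict convergence of ingoing null geodesics on $\tilde S$ then gives $\rho'>0$ on $S$, and the fact that $\tilde S$ is a cross-section of a Killing horizon gives vanishing outgoing expansion, hence $\rho\equiv 0$ on $S$, so $S$ is marginally outer trapped.

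Next, I would argue that $S$ is a supersymmetric surface. All the data entering the supersymmetry equations (\ref{eq:m susy condition a})--(\ref{eq:m-bar susy condition b})---the spin coefficients (including $\beta$ and $\beta'$, which involve tangential derivatives of the normal basis vectors), the Maxwell spinor components $\phi_{AB}$, and the tangential components of the gauge potential appearing in $\ethmm$ and $\ethmmbar$---are fully determined on $S$ by the pull-back along $\Phi$ from $\tilde S$, as spelled out in the proof of Corollary \ref{theorem:third law in generality}. Restricting a non-trivial supercovariantly constant spinor on $\tilde M$ to $\tilde S$ yields a non-trivial solution of the tangential supercovariant constancy conditions there, which pulls back along $\Phi$ to a non-trivial solution on $S$. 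Hence $S$ is supersymmetric.

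Finally, I would invoke Theorem \ref{thm:equiv to 2.8 but with inner boundary}. For $\Lambda<0$, part (i) directly forbids a supersymmetric $S$ with the properties just established, giving the desired contradiction. For $\Lambda=0$, part (ii) forces every connected component of $T$ to be \emph{marginally} outer trapped, i.e.\ $\rho\equiv 0$ on $T$; but by hypothesis each component of $T$ is outer trapped ($\rho>0$), which is incompatible, and we again obtain a contradiction. The only step requiring genuine care is the NP and supersymmetry transfer along $\Phi$, but this is entirely local near $S$ and is identical to the corresponding step in Corollary \ref{theorem:third law in generality}, so no new technical work is needed beyond noting that the presence of an inner boundary at $T$ is invisible to this local argument.
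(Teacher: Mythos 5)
Your proposal is correct and follows essentially the same route as the paper: transfer the NP data, Maxwell field and supersymmetry conditions from $\tilde S$ to $S$ along $\Phi$ exactly as in Corollary \ref{theorem:third law in generality}, conclude that $S$ is supersymmetric and marginally outer trapped with $\rho'>0$, and then contradict Theorem \ref{thm:equiv to 2.8 but with inner boundary}. You are in fact slightly more explicit than the paper in spelling out where the contradiction arises in the $\Lambda=0$ case (part (ii) forcing $T$ to be marginally outer trapped against the strict $\rho>0$ hypothesis), which is exactly the reading the paper intends, as confirmed by its closing remark on weakening the assumption on $T$.
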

\begin{proof}
Assume the existence of such a diffeomorphism and follow the proof of Corollary \ref{theorem:third law in generality}. As in that proof, the diffeomorphism is such that all important quantities have a consistent pull-back to $S$ from $\tilde{S}$ which imply that $S$ must be supersymmetric and marginally outer trapped. We then immediately obtain a contradiction to Theorem \ref{thm:equiv to 2.8 but with inner boundary}.
\end{proof}
The assumption on $T$ can be weakened slightly. If $\Lambda<0$ the result holds assuming only that the connected components of $T$ are {\it weakly} outer trapped. If $\Lambda=0$ it holds assuming that the components of $T$ are weakly outer trapped and at least one of them is not marginally outer trapped.

\section{Discussion}

We have extended in two ways the third law of black hole mechanics for supersymmetric black holes proved in \cite{reall2025lawblackholemechanics}. First, we have extended the result to supersymmetric black holes with negative cosmological constant, proving that such black holes cannot form in finite time in gravitational collapse of matter satisfying the local mass-charge inequality \eqref{eq:localbps}. Second, we have generalised the results for zero or negative cosmological constant to cover black holes that are not formed in gravitational collapse, such as two-sided black holes. Here we proved that if the initial black hole contains a trapped surface then it cannot evolve to a supersymmetric black hole in finite time if matter satisfies \eqref{eq:localbps}.

Our results apply to a black hole solution admitting a horizon cross-section $S$ on which the (spacetime) metric, Maxwell field and extrinsic curvature coincide with those of a supersymmetric black hole. By continuity, this will be the case if the metric and Maxwell field just outside $S$ agree with those of a supersymmetric black hole. But there might be more general solutions, differing from a supersymmetric black hole just outside $S$, with the metric and Maxwell field on $S$ agreeing with those of a supersymmetric black hole, but not the extrinsic curvature, in particular not the quantity $\rho'$ measuring the convergence of ingoing null geodesics from $S$. In this case, $S$ need not be supersymmetric and our results do not exclude the existence of such solutions. 

Not all extremal black holes are supersymmetric so our results do not exclude the possibility of violating the third law by forming an extremal but non-supersymmetric black hole in finite time. Since extremal Reissner-Nordstr\"om-adS is not supersymmetric, it would be interesting to know if the methods of \cite{kehle2025gravitational,kehle2024extremalblackholeformation} could be adapted to construct third-law violating solutions describing the formation in finite time of an extremal Reissner-Nordstr\"om-adS black hole. This seems very likely for matter that does not satisfy \eqref{eq:localbps}, such as a massless charged scalar \cite{kehle2025gravitational} (or small mass charged Vlasov matter \cite{kehle2024extremalblackholeformation}), but it might also be possible for matter that does satisfy \eqref{eq:localbps}.

Similarly, if matter does not satisfy \eqref{eq:localbps} then it might be possible to form a supersymmetric Kerr-Newman-adS black hole in finite time. However, constructing such a solution is probably more difficult than constructing a type of solution whose existence was conjectured in \cite{kehle2025gravitational}, describing the formation in vacuum of an extremal Kerr black hole in finite time. Progress towards constructing such solutions has been made in \cite{Kehle:2023eni}, where it is proved that a slowly rotating Kerr black hole can form in finite time. 

If matter does satisfy \eqref{eq:localbps} then our results do not exclude the possibility of forming a supersymmetric black hole in {\it infinite} time, e.g., a solution describing gravitational collapse to form a black hole for which the spacetime geometry and Maxwell field near the event horizon asymptote to those of extremal Reissner-Norstr\"om (with $\Lambda=0$) or supersymmetric Kerr-Newman-adS at infinite time. It seems plausible that such solutions do exist, and that some such solutions will be {\it critical} solutions, i.e., solutions that, in the moduli space of solutions, lie on the boundary separating solutions describing gravitational collapse to form black holes from solutions that disperse \cite{kehle2024extremalblackholeformation}. As noted in \cite{kehle2024extremalblackholeformation}, critical solutions cannot contain trapped surfaces and so they are not third-law violating solutions. Solutions approaching extremal Reissner-Norstr\"om (with $\Lambda=0$) asymptotically have been constructed for massless uncharged scalar field matter \cite{Murata:2013daa,Angelopoulos:2024yev}: these describe perturbations of a pre-existing black hole, rather than formation of a black hole by gravitational collapse. Nevertheless, they are still critical solutions, separating black hole solutions from (geodesically incomplete) non-black-hole solutions \cite{Angelopoulos:2024yev}.

\subsection*{Acknowledgments}
AMM is supported by an STFC studentship and HSR is supported by STFC grant no. ST/X000664/1.

\appendix
\section{Non-negativity of the Nester-Witten functional} \label{appendix:supercovariant sparling-like identity}

In this Appendix we will give the derivation of \eqref{eq:sw_id}. 
In this section, we choose to work with the Hodge dual of what we have called the Nester-Witten two-form.
The motivation for this is such that a clear comparison can be made with the calculation presented in \cite{Kosteleck__1996}.
Explicitly, we define the 2-form
\begin{equation}
    -\hat{E}(\psi) \coloneqq \star \hat{\Lambda}(\psi),
\end{equation}
which, written in terms of Dirac spinors (after raising indices), is
\begin{equation}
    \hat{E}^{ab}(\psi) = \bar{\psi}\gamma^{abc}\hat{\nabla}_c\psi + c.c.,
\end{equation}
where $\gamma^{abc}\coloneqq \gamma^{[a}\gamma^{b}\gamma^{c]}$.
By using properties of the Hodge dual, we have that
\begin{equation}
    \star \hat{E}(\psi) = \hat{\Lambda}(\psi),
\end{equation}
and so
\begin{equation}
    \int_{\partial\Sigma} \hat{\Lambda} = \int_\Sigma d\star{\hat{E}}.
\end{equation}
This can be written as
\begin{equation} \label{eq:nw functional in terms of divergence}
    \int_{\partial\Sigma} \hat\Lambda = \int_\Sigma N_a\nabla_b \hat{E}^{ab},
\end{equation}
where $N^a$ is the unit normal to $\Sigma$ in $M$.

Explicitly expanding the terms in the gauge-supercovariant derivatives, we write the quantity $\hat{E}$ as
\begin{equation}
    \hat{E}^{ab} = E^{ab} + T^{ab} + M^{ab} + H^{ab},
\end{equation}
where
\begin{equation}
    \begin{split}
        E^{ab} &\coloneqq \bar{\psi}\gamma^{abc}\nabla_c\psi + c.c.,\\
        P^{ab} &\coloneqq -\frac{2i}{l} A_c \bar{\psi}\gamma^{abc}\psi,\\
        M^{ab} &\coloneqq \frac{2i}{l} \bar\psi\gamma^{ab}\psi,\\
        H^{ab} &\coloneqq \frac{1}{2} F_{de}\bar{\psi}\gamma^{abc}\gamma^{de}\gamma_c\psi,
    \end{split}
\end{equation}
where we have introduced $1/l \coloneqq K/\sqrt{2}$.
As in \eqref{eq:nw functional in terms of divergence}, we now calculate the divergences of each term: we shall only calculate $E^{ab}$ and $P^{ab}$ explicitly.
Using a result of \cite{nester1981new} gives
\begin{equation}
    \nabla_a E^{ab} = 2 \nabla_a \bar{\psi} \gamma^{abc} \nabla_c \psi - G^{bc}X_c = 2 \nabla_a \bar{\psi} \gamma^{abc} \nabla_c \psi  - \Lambda X^b - 8\pi T^{(m)}{^{ba}}X_a  - 8 \pi T^{MW}{^{ba}}X_a,
\end{equation}
where $X^a$ is equivalent to \eqref{eq:vector formed from spinors} and where we have used the Einstein equation \eqref{eq:efe}.

We now calculate the divergence of $P^{ab}$.
The reason for choosing to compute this term in particular is that it was likely omitted in the results of \cite{Kosteleck__1996} due to their lack of a gauge-covariant supercovariant derivative: we will see that, without this term, non-negativity of the Nester-Witten functional cannot be proven.
\begin{equation}
    \nabla_a P^{ab} = \frac{-2i}{l}\Big\{ -\frac{1}{2}F_{ac} \bar\psi \gamma^{acb}\psi + A_c \nabla_a\bar\psi\gamma^{abc}\psi + A_c \bar\psi\gamma^{abc}\nabla_a\psi  \Big\},
\end{equation}
where we have used the definition $F=dA$.
By computing the divergences of all other terms and summing, one sees that
\begin{equation}
    \nabla_a \hat{E}^{ab} - 2 \overline{\hat{\nabla}_a\psi}\gamma^{abc}\hat\nabla_c\psi = -8\pi T^{(m)}{^b}{_a}X^a - 8\pi \bar\psi J^b \psi,
\end{equation}
where $J_b$ is the electric current of the charged matter.
In getting to this result, we have used Maxwell's equations and various identities of gamma matrices presented in \cite{gibbons1982bogomolny}.
Of key importance is that, with the inclusion of the correct gauging term in the supercovariant derivative, the correctly modified Nester-Witten two-form gains the $P_{ab}$ term: only with this term do the various terms involving products of the Maxwell field and the cosmological constant cancel.
Thus, using this, we see
\begin{equation}
    \int_{\partial\Sigma} \hat\Lambda = \int_\Sigma N^a \Big\{ 8\pi \big(T^{(m)}_{ab}X^b + J_a \text{Re}(V) \big) - 2\overline{\hat{\nabla}_a\psi}\gamma^{abc}\hat\nabla_c\psi \Big\},
\end{equation}
where we have used that $\bar\psi\psi = \text{Re}(V)$.
Finally, expanding the term quadratic in gauge-supercovariant derivatives,
\begin{equation}
    \int_{\partial\Sigma} \hat\Lambda = \int_\Sigma  \Big\{ 8\pi N^a\big(T^{(m)}_{ab}X^b + J_a \text{Re}(V) \big) - 2 h^{ij}(\hat{\nabla}_i\psi)^{\dagger}(\hat{\nabla}_j\psi) - (\gamma^i\hat{\nabla}_i\psi)^{\dagger}(\gamma^j\hat{\nabla}_j\psi)\Big\}.
\end{equation}
We see that, if the local mass-charge inequality holds on $\Sigma$, all terms but the final term are non-negative (the induced metric $h_{ij}$ is negative-definite).
The final term is non-positive.
If, now, we impose the gauge-supercovariant Sen-Witten equation on $\Sigma$, i.e.
\begin{equation}
    \gamma^i \hat{\nabla}_i \psi = \gamma^a h{_a}{^b}\hat{\nabla}_b \psi = 0,
\end{equation}
then the final term vanishes and hence
\begin{equation} \label{eq:nw 2-form integral over bulk}
    \int_{\partial\Sigma} \hat\Lambda = \int_\Sigma  \Big\{ 8\pi N^a\big(T^{(m)}_{ab}X^b + J_a \text{Re}(V) \big) - 2 h^{ij}(\hat{\nabla}_i\psi)^{\dagger}(\hat{\nabla}_j\psi)\Big\},
\end{equation}
is non-negative, precisely the result required for theorem \ref{thm:equiv to 2.6}. 
The vanishing of the two terms on the right-hand side of \eqref{eq:nw 2-form integral over bulk} is necessary and sufficient for the vanishing of the LHS.
The vanishing of the first is equivalent to \eqref{eq:equiv to eqn 19 in 1.5}, while the vanishing of the second implies that $\psi$ is supercovariantly constant over $\Sigma$.


\section{Existence of a solution to the Sen-Witten equation} \label{sec:existence}

Following \cite{reall2025lawblackholemechanics}, we present a justification for the existence of a solution to the supercovariant Sen-Witten equation on a surface $\Sigma$ with $\partial \Sigma = S\cup T$ satisfying boundary conditions on $S$ and $T$ given by equations (\ref{eq: boundary conditions on S in 1.5}) and (\ref{eq:boundary conditions on apparent horizon}) respectively. 

First we establish uniqueness.
Given two solutions to the modified Sen-Witten equation, $\epsilon_1$ and $\epsilon_2$, satisfying the boundary conditions on $S$ and $T$ given in equations \eqref{eq: boundary conditions on S in 1.5} and \eqref{eq:boundary conditions on apparent horizon} respectively, consider the difference of these two solutions, $\epsilon_{\Delta}$, which is itself a solution (satisfying homogenous boundary conditions).
Consider, now, the Nester-Witten functional over $\partial \Sigma$ acting  on $\epsilon_\Delta$: $\hat{I}_{\partial \Sigma}[\epsilon_\Delta] = \hat I_{S}[\epsilon_\Delta] + \hat I_T [\epsilon_\Delta] $. 
The results of the above Appendix show that $\hat I_{\partial \Sigma}[\tilde{\epsilon}]$ is non-negative for any $\tilde{\epsilon}$ satisfying the gauge-supercovariant Sen-Witten equation on $\Sigma$, i.e. $\hat I_{\partial \Sigma}[\epsilon_\Delta]\geq0$.
However, by inspection of equations \eqref{eq:nw functional} and \eqref{eq:nw functional on H}, we see that each of $\hat I_{S}[\epsilon_\Delta]$ and $\hat I_T [\epsilon_\Delta]$ are non-positive by virtue of the homogeneous boundary conditions of $\epsilon_\Delta$ (on $T$, the result follows as $\rho\geq 0$).
Therefore, each term is individually zero: in particular $\hat I_S [\epsilon_\Delta] = 0$ and so we have that $\epsilon_\Delta = 0$, as can be seen by inspection of \eqref{eq:nw functional}.
The results of the above appendix imply that for $\hat I_{\partial \Sigma} [\epsilon_\Delta] = 0$ then $h^a{_b}\hat{\nabla}_a\epsilon_{\Delta} = 0$ on $\Sigma$ and so $\epsilon_\Delta$ must be zero everywhere (as it is zero on $S$).
In particular, $\epsilon_\Delta = 0$ on $T$, and so $\hat I_T [\epsilon_\Delta] = 0$ for $T$ weakly outer trapped as required.
Therefore, solutions to the gauge-supercovariant Sen-Witten equation, with boundary conditions (\ref{eq: boundary conditions on S in 1.5}) and (\ref{eq:boundary conditions on apparent horizon}), are unique.
This establishes that the gauge-supercovariant Sen-Witten operator with homogeneous boundary conditions has trivial kernel.

We now establish existence. Following \cite{reall2025lawblackholemechanics}, we now provide a heuristic argument for why, as stated in a footnote of \cite{douganquasi}, checking that the kernel of the adjoint problem is trivial is sufficient to show existence.
Firstly, we convert the boundary conditions satisfied by spinor $\epsilon$ on $S$ to homogenous ones by subtracting off an arbitrary, smooth spinor field, $\epsilon_S$, satisfying the conditions (\ref{eq: boundary conditions on S in 1.5}) on $S$ and (\ref{eq:boundary conditions on apparent horizon}) on $T$.
We set up an orthonormal basis on $\Sigma$, such that the $0$-direction is normal to $\Sigma$ and the $3$-direction is normal to spacelike 2-surfaces lying within $\Sigma$.
The homogenous boundary conditions on $S$ and $T$, can be written in Dirac spinor form as:
\begin{equation} \label{eq:boundary conditions in dirac spinor notation}
    (\gamma^0 + \gamma^3)\epsilon = 0 \text{ on } S,\qquad (\gamma^0-\gamma^3)\epsilon = 0 \text{ on }T.
\end{equation}
The gauge-supercovariant Sen-Witten equation is now inhomogenous with homogenous boundary conditions given by equation (\ref{eq:boundary conditions in dirac spinor notation}) above:
\begin{equation} \label{eq:sw existence stuff}
    \hat{\mathcal{D}}\epsilon \coloneqq \gamma^b h^a{_b}\hat{\nabla}_a\epsilon = f,
\end{equation}
where $f\coloneqq-\hat{\mathcal{D}}\epsilon_S$.
On the space of charged Dirac spinors on $\Sigma$, we introduce the following inner product:
\begin{equation}
    (\eta,\tau)\coloneqq \int_\Sigma \eta^\dagger \tau.
\end{equation}
We solve equation (\ref{eq:sw existence stuff}) by requiring that $\epsilon$ be such that the norm (defined by the above inner product) $||\hat{\mathcal{D}}\epsilon -f||^2$ is minimised.
Let $\eta\coloneqq \hat{\mathcal{D}}\epsilon-f$.
Varying with respect to $\epsilon$, we get
\begin{equation}\label{eq:variations}
    \delta_\epsilon(\eta,\eta) = \int_\Sigma\eta^\dagger \hat{\mathcal{D}}\delta\epsilon = \int_\Sigma \prescript{3}{}{\hat{\nabla}}_i(\eta^\dagger \gamma^i \delta\epsilon) + (\hat{\mathcal{D}}^\dagger\eta)^\dagger\delta\epsilon,
\end{equation}
where $\prescript{3}{}{\hat{\nabla}}_a$ is the the induced covariant derivative in $\Sigma$.
The total derivative term can be converted to integrals over $S$ and $T$ by use of Stokes' theorem.
The integrands of the boundary terms are both of the form $\eta^\dagger n_a\gamma^a\delta\epsilon$, where $n^a$ is the unit normal to each surface respectively and, in our orthonormal basis, has only a $3$-component ($n^a$ from $S$ points into $\Sigma$, while $n^a$ from $T$ points out of $\Sigma$).
These surface terms need to vanish for a well-defined variational problem.
Whence, for variations $\delta\epsilon$ satisfying (\ref{eq:boundary conditions in dirac spinor notation}) we have
\begin{equation}
    \eta^{\dagger}\gamma^3\delta\epsilon = \frac{1}{2} \eta^\dagger[(\gamma^0+\gamma^3)-(\gamma^0-\gamma^3)]\delta\epsilon=0.
\end{equation}
Upon using the boundary conditions satisfied by $\delta\epsilon$ on $S$, this implies
\begin{equation}
    \eta^\dagger(\gamma^0-\gamma^3)=0 \implies (\gamma^0+\gamma^3)\eta = 0,
\end{equation}
while on $T$, we get
\begin{equation}
    (\gamma^0-\gamma^3)\eta=0,
\end{equation}
i.e. we see that the boundary conditions are self-adjoint.
Under these self-adjoint boundary conditions, the condition that $(\eta,\eta)$ is minimised by $\epsilon$ is then equivalent to saying that equation (\ref{eq:variations}) vanishes for any variation $\delta \epsilon$ satisfying the appropriate boundary conditions, which implies
\begin{equation}\label{eq:adjoint problem}
    \hat{\mathcal{D}}^\dagger\eta = 0,
\end{equation}
with the aforementioned boundary conditions on $\eta$.

If, then, the kernel of $\hat{\mathcal{D}}^\dagger$ is trivial, this would imply that $\epsilon$ satisfies (\ref{eq:sw existence stuff}).
In the case of \cite{douganquasi}, the unmodified Sen-Witten operator is formally self-adjoint, and hence $\eta\equiv0$ as the kernel of the Sen-Witten operator with homogenous boundary conditions is trivial.
However, in the case under consideration the generalised Sen-Witten operator is not self-adjoint.
Similarly to \cite{reall2025lawblackholemechanics, chrusciel2003boundaryvalueproblemsdiractype}, we have the following supercovariant generalisation of the Sen-Witten operator on $\Sigma$:
\begin{equation}
    \hat{\mathcal{D}} \coloneqq \gamma^i D_i - \frac{1}{2}H \gamma^0 + \frac{1}{2}E_i\gamma^i \gamma^0 - \frac{1}{4}\epsilon_{ijk}B^i \gamma^{jk} + i\frac{3K}{\sqrt{2}},
\end{equation}
where $D_i$ is the induced $U(1)$-gauge covariant derivative on $\Sigma$, $H$ is the trace of the extrinsic curvature of $\Sigma$ in $M$ and $E_i$ and $B_j$ are the components of the electric and magnetic fields on $\Sigma$. It is clear that, upon taking the adjoint of this equation, the final term involving $K$ is not preserved (nor is the term with the magnetic field), and so this operator is not formally self-adjoint.

This issue can be circumvented by an argument given in \cite{reall2025lawblackholemechanics}: we write $\hat{\mathcal{D}}=\mathcal{D}+\Omega$ and view this as a map between appropriate Hilbert spaces, where $\mathcal{D}$ is the $U(1)$-gauged Sen-Witten operator and $\Omega$ contains the terms in $\hat{\cal D}$ involving the extrinsic curvature, Maxwell field and the cosmological constant.
In the case of interest, $\Sigma$ is compact with compact boundaries and hence $\Omega$ can be viewed as a compact operator between these spaces.
By a standard result of Fredholm theory, the indices (the difference between the dimensions of the kernel and cokernel) of $\hat{\mathcal{D}}$ and $\mathcal{D}$ are then equal.
Since $\cal D$ is self-adjoint, its index vanishes and hence so too does the index of $\hat{\cal D}$.
The uniqueness argument given above shows that $\text{dim}(\text{ker}(\hat{\mathcal{D}}))=0$ and so $\text{dim}(\text{coker}(\hat{\mathcal{D}}))=0$ (which is necessary and sufficient for surjectivity of $\hat{\mathcal{D}}$), thus $\hat{\mathcal{D}}$ is a bijection between these Hilbert spaces.
Bijectivity of $\hat{\mathcal{D}}$ establishes the existence and uniqueness of solutions to equation (\ref{eq:sw existence stuff}) with boundary conditions (\ref{eq:boundary conditions in dirac spinor notation}).

\section{The form of the stress-tensor in a supersymmetric spacetime} 

\label{appendix:form of stress tensor with null killing spinor}

In the proof of Theorem \ref{thm:equiv to 2.8}, we make use of a result that reads: given a spacetime $(M,g)$, at any point $p\in M$ such that a neighbourhood of $p$ admits a non-vanishing Killing spinor, we have
\begin{equation} \label{eq:result for theorem 2.8 introduction}
    (T_{ab}^{(m)} - \frac{1}{2}g_{ab}T^{(m)}) X^b = \frac{1}{2} T^{(m)}_{ab}X^b,
\end{equation}
with $T^{(m)}_{ab}$ is the stress-tensor of matter in $(M,g)$ defined by the Einstein equation.
In this Appendix, we will prove this statement.
Additionally, and for completeness, we will derive the allowed form of the stress-tensor and currents in a supersymmetric spacetime. This problem has been considered previously in \cite{caldarelli2004supersymmetric} but no details of the analysis were provided. We will confirm the result stated in \cite{caldarelli2004supersymmetric} for the case where $X^a$ is timelike, and we will also determine the result when $X^a$ is null.

In the following, we choose to change conventions, adopting those of \cite{Caldarelli_2003}. This change is purely for convenience in using various identities as presented in that paper: it is of no significance for the final results.
In particular, the metric signature is now ``mostly plus'', the gamma matrices satisfy $\{ \gamma^a, \gamma^b \} = 2g^{ab}$, $\gamma^5 \coloneqq i \gamma^0\gamma^1\gamma^2\gamma^3$ and Dirac conjugation is defined by $\bar\epsilon\coloneqq i\epsilon^\dagger\gamma^0$.
Additionally to this, the change in metric convention means the sign of the cosmological constant term in Einstein's equation is now reversed, as is the sign of the Maxwell stress-energy tensor. Moreover, Maxwell's equation becomes
\begin{equation}\label{eq:mw in other signature}
    \nabla^a F_{ab} = - 4\pi J_b.
\end{equation}

Given a spinor, following \cite{Caldarelli_2003}, we can construct various bilinears from that spinor. The five independent spinor bilinears are:\footnote{
The $1$-form $A_a$ should not be confused with the electromagnetic potential. We will not need to refer explicitly to the latter in this Appendix.} two scalars $f$ and $g$, two one-forms $X_a$ and $A_a$ and a two-form $\Psi_{ab}$.
These are defined as such
\begin{equation}
    \begin{split}
        f &\coloneqq \bar{\epsilon}\epsilon,\\
        g &\coloneqq i\bar{\epsilon}\gamma^5\epsilon,\\
        X_a &\coloneqq i\bar{\epsilon}\gamma_a\epsilon,\\
        A_a &\coloneqq i\bar{\epsilon}\gamma^5 \gamma_a\epsilon,\\
        \Psi_{ab} &\coloneqq i\bar{\epsilon}\gamma_{ab}\epsilon.
    \end{split}
\end{equation}
Factors of $i$ are necessary to ensure the reality of each form.
Among various identities, which we will introduce as needed, the two one-forms satisfy the following \cite{Caldarelli_2003}:
\begin{equation}
    X^2 = -A^2 = - (f^2 + g^2) \text{   and    } X\cdot A=0.
\end{equation}
$X^a$ cannot be zero (for $\epsilon$ non-vanishing) as in any orthonormal frame $X_0 = -\epsilon^\dagger\epsilon \neq 0$, and so it must be causal and non-zero.
From this, $A^a$ can be spacelike, null and proportional to $X^a$ or zero.

From the existence of a supercovariantly constant spinor, $\epsilon$, in a neighbourhood of a point $p\in M$, one obtains the following integrability condition in that neighbourhood \cite{Caldarelli_2003}
\begin{equation}
\begin{split}
     [\hat{\nabla}_b,\hat{\nabla}_a]\epsilon = &\Big\{ \frac{1}{l}(\star F_{ba}\gamma^5 - iF_{ba}) +\frac{1}{2l^2}\gamma_{ba} + \frac{1}{4}R^{cd}{_{ab}}\gamma_{cd} - F^{cd}F_{d [b}\gamma_{a]c} \\ & + \frac{1}{4}F^2 \gamma_{ba} - \frac{i}{l}F^{c}{_{[b}} \gamma_{a]c} - \frac{i}{2}\gamma_{cd[b}\nabla_{a]}F^{cd} -i \nabla_{[b}F{_{a]}}{^c}\gamma_c\Big\}\epsilon = 0,
\end{split}
\end{equation}
where $1/l^2 = -3\Lambda = K^2/2$, $F^2\coloneqq F_{ab}F^{ab}$ and $\gamma_{ab}\coloneqq \gamma_{[a}\gamma_{b]}$.

Contracting this with $\gamma^a$
we obtain, after some algebra,
\begin{equation}\label{eq:integrability}
    \Big\{E_{ab}\gamma^b  + i\nabla_c F^{cb}\gamma_{b}\gamma_{a} + \frac{1}{2}\big[\nabla^{[b}F^{cd]}\epsilon_{bcda}\gamma^5 + 3i\nabla_{[a}F_{bd]}\gamma^{bd} \big]\Big\}\epsilon = 0,
\end{equation}
where we have introduced the tensor
\begin{equation}
    E_{ab} \coloneqq R_{ab} - \Lambda g_{ab} + 2 F_{ac}F^c{_b} + \frac{1}{2}g_{ab}F^2.
\end{equation}
The energy-momentum tensor of the matter is defined via the Einstein equation:
\begin{equation}\label{eq:trace reversed einstein}
    E_{ab} = 8\pi(T_{ab}^{(m)} - \frac{1}{2}g_{ab}T^{(m)}),
\end{equation}

We now impose Maxwell's equations (\ref{eq:mw in other signature}). Equation (\ref{eq:integrability}) then becomes
\begin{equation}\label{eq:integrability with maxwell}
    \Big\{E_{ab}\gamma^b  - 4\pi i J^b\gamma_{b}\gamma_{a} \Big\}\epsilon = 0.
\end{equation}
We now show that $J_a$ must be causal (or zero).
To do so, contract this equation with $\gamma^a$ to obtain
\begin{equation} \label{eq:trace of stress-energy relation to current}
    \Big\{T{^{(m)}}- i J^a \gamma_a  \Big\}\epsilon = 0.
\end{equation}
Then, acting with $T{^{(m)}}+i J^b\gamma_b$, one obtains
\begin{equation}
    \Big\{T{^{(m)}}^2 + J\cdot J\Big\}\epsilon=0.
\end{equation}
Since $\epsilon$ is not zero at $p$, we see that, for $T$ to be real we must have $J\cdot J\leq0$ and hence $J$ is causal (or zero).

Contracting equation \eqref{eq:integrability with maxwell} with $i\bar{\epsilon}$, we obtain
\begin{equation} \label{eq:real and imaginary parts of the integrability condition}
    (E_{ab}X^b + 4\pi f J_a) - 4\pi iJ^b\Psi_{ba} = 0,    
\end{equation}
where we have used the definitions of the various relevant spinor bilinears.
The spinor bilinears are real so taking the imaginary part gives gives
\begin{equation}\label{eq:imaginary part of integrability}
    i_J \Psi = 0
\end{equation}
where $i_J$ denotes contraction of the vector $J^a$ with the first index of a differential form.
Using this and contracting \eqref{eq:trace of stress-energy relation to current} with $i\bar\epsilon\gamma_b$ we get
\begin{equation} \label{eq:trace times vector in terms of current}
    T{^{(m)}} X_b + f J_b = 0,
\end{equation}
from which we see that if $f=0$ then $T{^{(m)}}=0$ (as $X^a \neq 0$) and if $f\neq0$ then $J_a = -T{^{(m)}}f^{-1} X_a$.
We now split our analysis of the allowed form of the electric current into the cases when $X^a$ is timelike and null.

Firstly, the null case.
By contracting equation \eqref{eq:trace of stress-energy relation to current} with $\bar \epsilon$, we obtain that $J\cdot X = 0$ and hence $J$ is spacelike, zero or null and proportional to $X^a$.
But, $J$ must be causal or zero and hence
\begin{equation}
    J = \alpha X,
\end{equation}
where $\alpha$ is a smooth function which can be zero. 
By the requirement from Maxwell's equations that $\star J$ be closed, $\alpha$ must be constant along integral curves of $X^a$.
It remains to check that $i_J \Psi = 0$ is satisfied.
This follows immediately from equation (2.15) of \cite{Caldarelli_2003} which says that $i_X \Psi = 0$.
Note that this result is independent of the Einstein equation in the sense that we only needed to assume that $E_{ab}$ was symmetric and its trace real for us to obtain this result.

Secondly, the timelike case.
Contracting \eqref{eq:trace times vector in terms of current} with $T{^{(m)}}X^b - fJ^a$ we obtain
\begin{equation}
    T{^{(m)}}^2 X^2 = f^2 J^2.
\end{equation}
Using that $T{^{(m)}}^2 = -J^2$ we obtain
\begin{equation}
    (f^2 + g^2) J^2 = f^2 J^2,
\end{equation}
and so for $J^2 \neq 0$ we have $g=0$ which implies $f\neq 0$ as $X^a$ is timelike.
Then, equation \eqref{eq:trace times vector in terms of current} implies that $J_a = -T{^{(m)}}f^{-1} X_a$.
Defining $\alpha \coloneqq -T{^{(m)}}f^{-1}$ this gives that $J=\alpha X$ where $\alpha$ is non-zero.
As $g=0$, it again follows from equation (2.15) of \cite{Caldarelli_2003} that $i_J \Psi = 0$ is satisfied by this form of the current.
If $J^2 = 0$ we have that $T{^{(m)}} = 0$ and so contracting \eqref{eq:trace of stress-energy relation to current} with $\bar\epsilon$ we obtain that $J\cdot X = 0$ and so $J=0$ as $X$ is timelike (in particular, $J$ is not null).
Therefore, we allow $\alpha$ to be zero corresponding to the case when $J=0$ - which trivially satisfies $i_J \Psi = 0$ - and so
\begin{equation}
    J = \alpha X.
\end{equation}
Again, we require $\alpha$ to be constant along the integral curves of $X^a$ as Maxwell's equations demand that $\star J$ be closed.
As in the null case, this result is independent of the Einstein equation in the sense described above.

We now present the results necessary for the proof of Theorem \ref{thm:equiv to 2.8}.
We treat the timelike and null cases concurrently, noting that the latter just requires setting $f=0$ everywhere it appears in the sequel.
At the point $p$, we have $J= \alpha X$.
Thus, the condition \eqref{eq:integrability with maxwell} becomes
\begin{equation}\label{eq:integrability with bps current put in}
    \Big\{ E_{ab}\gamma^b - 4\pi i (\alpha X^b)\gamma_b \gamma_a \Big\} \epsilon = 0.
\end{equation}
Contracting with $\gamma^a$, we get
\begin{equation}
    T^{(m)} \epsilon = -\alpha f \epsilon,
\end{equation}
Since $\epsilon \neq 0$ at $p$, this implies $T{^{(m)}} = -\alpha f$.
Using this and contracting \eqref{eq:integrability with bps current put in} with $i\bar{\epsilon}$, we obtain
\begin{equation} \label{eq:result for theorem 2.8}
    (T{^{(m)}}_{ab} - \frac{1}{2}g_{ab}T{^{(m)}}) X^b = \frac{1}{2} T{^{(m)}}_{ab}X^b = -\frac{1}{2} \alpha f X_a,
\end{equation}
with $\alpha g=0$.
The first two equalities provide the results in \eqref{eq:result for theorem 2.8 introduction}.

For completeness, we now go on to derive the form of the stress-energy tensor at $p$ in the timelike and null cases.
We will see, in both cases, that the result is dust with velocity vector field $X^a$.
Additionally, this will allow us to further constrain $\alpha$ in both cases.
In both cases, we will need the following simplification of \eqref{eq:integrability with bps current put in}
\begin{equation}\label{eq:simplification of 87}
    (T{^{(m)}}_{ab}\gamma^b - i\alpha X_a)\epsilon = 0,
\end{equation}
obtained using that $X^a\gamma_a \epsilon = if\epsilon$ and $T^{(m)}=-\alpha f$.

In the null case, we choose a basis\footnote{
In this Appendix, Latin indices refer to an arbitrary orthonormal basis and Greek indices to a specific orthonormal basis.} in which $X^\mu = (1,1,0,0)$.
The condition that $T{^{(m)}}_{ab}X^b = 0$ implies that 
\begin{equation}\label{eq:stress-tensor in o/n basis null case conditions}
    T{^{(m)}}_{0A} + T{^{(m)}}_{1A} = 0, \qquad T{^{(m)}}_{00} + T{^{(m)}}_{01} = 0 = T{^{(m)}}_{11} + T{^{(m)}}_{10}
\end{equation}
where $A=2,3$.
We firstly consider expression \eqref{eq:simplification of 87} in this basis:
\begin{equation}
    (T{^{(m)}}_{\mu\nu}\gamma^\nu - i\alpha X_\mu)\epsilon = 0.
\end{equation}
Temporarily suspending the summation convention and multiplying the above with $T{^{(m)}}_{\mu\sigma}\gamma^\sigma + i\alpha X_\mu$ (where $\mu$ is some fixed index), we have
\begin{equation}\label{eq:integrability in o/n basis no summation convention}
    (T{^{(m)}}_{\mu\nu}T{^{(m)}}{_\mu}{^\nu} + \alpha^2 X_\mu X_\mu)\epsilon = 0.
\end{equation}
At $p$, $\epsilon \neq 0$, and so we get that $T{^{(m)}}_{\mu\nu}T{^{(m)}}{_\mu}{^\nu} + \alpha^2 X_\mu X_\mu =0$.
By considering $\mu = 0$, we obtain, after using \eqref{eq:stress-tensor in o/n basis null case conditions}, that $\Sigma_A T{^{(m)}}_{0A}T_{0A} + \alpha^2= 0$ and since this is a sum of non-negative terms, we must have that both are zero, i.e. $T{^{(m)}}_{0A} = 0$ and $\alpha = 0$.
By virtue of \eqref{eq:stress-tensor in o/n basis null case conditions}, this also implies $T{^{(m)}}_{1A} = 0$.
Doing similarly for $\mu =2,3$ we get that $T{^{(m)}}_{AB}$ = 0.
Therefore, the only non-zero components of $T{^{(m)}}_{\mu\nu}$ are $T{^{(m)}}_{00},T{^{(m)}}_{01}$ and $T{^{(m)}}_{11}$, only one of which is independent by virtue of \eqref{eq:stress-tensor in o/n basis null case conditions}.
Defining $\chi \coloneq T{^{(m)}}_{00}$, we have that $T_{01}= -\chi = - T_{11}$ and so $T{^{(m)}}_{\mu\nu} = \chi X_{\mu}X_{\nu}$. 
Since this is a tensor equation
\begin{equation}
\label{eq:dust}
    T{^{(m)}}_{ab} = \chi X_a X_b.
\end{equation}
A consequence of the above is that $\alpha=0$ and so the electric current of this matter must be zero. This implies that $T{^{(m)}}_{ab}$ must be conserved so $\chi$ must be constant along integral curves of $X$.
So, in the null case, matter is constrained to be uncharged, null dust, as in the ungauged case \cite{tod1983all}.

In the timelike case, we work in an orthonormal basis adapted to $X^a$.
Much of the computation is similar: the idea is the same and so we will not explain the steps in such detail.
We now choose an orthonormal basis in which $X^\mu = |f| (1,0,0,0)$ and again contract \eqref{eq:simplification of 87} with $T{^{(m)}}_{\mu\sigma}\gamma^\sigma + i\alpha X_\mu$ to obtain \eqref{eq:integrability in o/n basis no summation convention}, suspending the summation convention on the fixed index $\mu$.
Looking at the $\mu=1,2,3$ components and using that $T_{0i}= 0$ by virtue of \eqref{eq:result for theorem 2.8}, we get that $T_{ij} = 0$ and so we see $T_{00}$ is the only non-zero component of the stress-tensor.
Hence, $T{^{(m)}}_{\mu \nu} = T_{00}f^{-2} X_\mu X_\nu$ at $p$ in this frame.
But, since $T{^{(m)}}=-T{^{(m)}}_{00}=-\alpha f$ we get that $T{^{(m)}}_{\mu\nu} = \alpha f^{-1} X_\mu X_\nu$.
Since this is a tensor equation, it is valid in any basis and hence \eqref{eq:dust} holds 
at $p$, where $\chi\coloneq \alpha f^{-1}$ is a smooth function constant along the flow of $X^a$ (by virtue of both $\alpha$ and $f$ being constant along the flow of $X^a$).
Using this definition of $\chi$, we have that 
\begin{equation}
   \label{eq:current}
    J_a = \chi fX_a.
\end{equation}
So, we find that, in the timelike case, matter is constrained to be electrically charged, timelike dust, in agreement with \cite{caldarelli2004supersymmetric}.

We can summarize the results of this analysis without splitting into timelike or null cases by saying that equations \eqref{eq:dust} and \eqref{eq:current} hold in general, along with $\chi g=0$.

\end{document}